	\let\emptyset\varnothing 
\title{Public Information Representation\\for Adversarial Team Games}
\author{Luca Carminati, Federico Cacciamani, Marco Ciccone, Nicola Gatti\\
Dipartimento di Elettronica, Informazione e Bioingegneria, Politecnico di Milano\\
Piazza Leonardo da Vinci, 32, 20133, Milano, Italy\\
luca5.carminati@mail.polimi.it, \{federico.cacciamani, marco.ciccone, nicola.gatti\}@polimi.it\\
}
\algrenewcommand\algorithmicindent{1.0em}%
\begin{document}

\newtheorem{theorem}{Theorem}
\newtheorem{corollary}{Corollary}[theorem]
\newtheorem{lemma}[theorem]{Lemma} 
\newtheorem{definition}{Definition}

\newcommand{\argmin}{arg\min}
\newcommand{\argmax}{arg\max}
\newcommand{\Reals}{\mathbb{R}}
\newcommand{\team}{\mathcal{T}}
\newcommand{\Z}{\mathcal{Z}}
\newcommand{\St}{\mathcal{S}}
\newcommand{\N}{\mathcal{N}}
\newcommand{\algoname}[1]{\textnormal{\textsc{#1}}}

\definecolor{mygreen}{rgb}{0.0, 0.5, 0.0}
\newcommand{\nb}[3]{{\colorbox{#2}{\bfseries\sffamily\scriptsize\textcolor{white}{#1}}}{\textcolor{#2}{\sf\small\textit{#3}}}}

\newcommand{\lu}[1]{\nb{Luca}{blue}{#1}}
\newcommand{\ma}[1]{\nb{Marco}{red}{#1}}
\newcommand{\fe}[1]{\nb{Federico}{mygreen}{#1}}
\newcommand{\nc}[1]{\nb{Nicola}{violet}{#1}}

\maketitle

\begin{abstract}

The study of sequential games in which a team plays against an adversary is receiving an increasing attention in the scientific literature.
Their peculiarity resides in the asymmetric information available to the team members during the play which makes the equilibrium computation problem hard even with zero-sum payoffs. The algorithms available in the literature work with implicit representations of the strategy space and mainly resort to \textit{Linear Programming} and \emph{column generation} techniques. Such representations prevent from the adoption of standard tools for the generation of abstractions that previously demonstrated to be crucial when solving huge two-player zero-sum games. 
Differently from those works, we investigate the problem of designing a suitable game representation over which abstraction algorithms can work. In particular, our algorithms convert a sequential team-game with adversaries to a classical \textit{two-player zero-sum} game. 
In this converted game, the team is transformed into a single coordinator player which only knows information common to the whole team and prescribes to the players an action for any possible private state. 
Our conversion enables the adoption of highly scalable techniques already available for two-player zero-sum games, including techniques for generating automated abstractions. 
Because of the \textsf{NP}-hard nature of the problem, the resulting Public Team game may be exponentially larger than the original one. To limit this explosion, we design three pruning techniques that dramatically reduce the size of the tree. 
Finally, we show the effectiveness of the proposed approach by presenting experimental results 
on \textit{Kuhn} and \textit{Leduc Poker} games, obtained by applying state-of-art algorithms for two players zero-sum games on the converted games.
\end{abstract}
\section{Introduction}\label{sec:intro}
Research efforts on imperfect-information games customarily focus on two-player zero-sum (we refer to these games with ``2p0s'' from here on) scenarios, in which two agents act in the same environment, receiving opposite payoffs. In this setting, superhuman performances have been achieved even in huge game instances, such as \textit{Poker Hold'em} \cite{BrownS17,Brown885,Moravck2017DeepStackEA} and \textit{Starcraft II}  \cite{Vinyals2019GrandmasterLI}.
%
%
The customary approach for 2p0s games is based on the generation of a game abstraction that is used offline to find a blueprint strategy that is refined online during the play.
Instead, the problem of solving games with multiple players is far from being addressed and lacks well-assessed techniques.

In our work, we focus on \emph{adversarial team games} in which a team of two (or more) agents cooperates against a common adversary (interestingly, our result directly applies to the case in which even the adversary is a team of players). In particular, we focus on the \textit{ex-ante coordination} scenario in which the team members agree on a common strategy beforehand and commit to playing it during the game without communicating any further. Examples of such a scenario are collusion in poker games, the defenders in the card-playing stage of \textit{Bridge}, and a team of drones acting against an adversary. Technically speaking, the team members share the same payoffs and coordinate against an adversary having opposite payoffs, in face of private  information given separately to each team member.

A natural way to characterize the desired behavior by each agent is the  \textit{team-maxmin} equilibrium (TME) solution concept, defined by Von Stengel in \cite{VonStengelKoller97}. In \cite{Celli2018ComputationalRF}, this concept is extended to extensive-form games considering different means of communication. The team-maxmin equilibrium with correlation (TMECor) characterizes optimal rational behavior in the \emph{ex-ante} coordination scenario we consider in this work. While the existence and uniqueness of value of a TMECor are guaranteed, the computation of an equilibrium corresponding to this solution concept is proven to be an inapproximable, \textsf{NP}-hard problem \cite{Celli2018ComputationalRF}. Therefore, the main open challenge is to efficiently find a strategy profile corresponding to a TMECor in a generic adversarial team game.

\textbf{Related works.} 
Two main classes of approaches are available in the literature to address adversarial team games.

The first class of approaches is based on mathematical programming. These approaches commonly resort to  \textit{column generation}. They constrain the team to play a probability distribution over a finite set of correlated plans and iteratively add new correlated plans giving the maximum increase in value for the team. 
This constraint allows a polynomial-time formulation of the TMECor problem, while the plan to add is determined by using an oracle working in an implicit formulation of the normal-form representation of the original game in which the private information of a team player can be ignored during the selection of a plan. \textit{Hybrid Column Generation} (HCG)~\cite{Celli2018ComputationalRF} iteratively solves two linear programs to determine the TMECor strategy profile of the team and the adversary considering the limited set of plans available, while uses an integer linear program to find the best response joint plan to add to the currently considered plans given the current strategy of the adversary. \textit{Fictitious Team Play} (FTP)~\cite{Farina2018ExAC} is instead a \textit{fictitious play}~\cite{BrownFP} procedure running on a modified representation of the original game; in this representation, one of the two team members selects a pure strategy at the start of the game, while the other team member plays against the adversary in the original game considering his teammate fixed as specified by their chosen plan. \textit{Faster Column Generation} (FCG)~\cite{Farina2021ConnectingOE} is a column-generation algorithm similar to HCG, but working in a more efficient semi-randomized correlated plans representation, prescribing a pure strategy for one team member and a mixed strategy for the other. This representation also allows a cost-minimizing formulation of the best response problem as a dual of the program to find a TMECor considering the restricted set of plans. Overall, those procedures are general and simple to implement, but the use of integer linear programs strongly limits the scalability of these exact formulations.

The second class of approaches is based on Multi-Agent Reinforcement learning (MARL). These approaches avoid considering plans over the entire game, and instead explicitly model the correlating coordination signal in the original extensive form adversarial team game. \textit{Soft Team Actor-Critic} (STAC)~\cite{Celli2019CoordinationIA} fixes a number of possible uniform signals at the start of the game and uses a modified actor-critic procedure to converge to a strategy for each player for each signal. \textit{Signal Mediated Strategies} (SIMS)~\cite{Cacciamani2021MultiAgentCI} works with a \textit{perfect-recall refinement} of the original game, populates a buffer of trajectories sampled from the optimal strategy for the joint team against the adversary in this refined game, and learns a distributed strategy for each team member in the original game from the trajectories in this buffer. Overall, RL techniques are more scalable than the mathematical programming ones, but present convergence issues. Indeed, STAC does not have any guarantees of converging to equilibrium, while SIMS is guaranteed to converge only on a particular class of Adversarial Team Games.

The above approaches work with an implicit strategy space representation that does not allow the adoption of the standard tools previously developed for 2p0s games, including abstraction generation, no-regret game solving algorithms, and subgame solving.

\textbf{Original contributions.} We propose an algorithmic procedure, called \algoname{PublicTeamConversion}, to convert a generic instance of a team game into a 2p0s game, and provide three algorithms, each returning an information-lossless abstraction of the converted game. Our conversion enables the adoption of customary techniques for 2p0s abstraction generation, no-regret game solving, and subgame solving.
Surprisingly, we prove that any state/action abstraction applied to the extensive-form game can be captured by our representation, while the reverse does not hold, thus showing that our game representation is more expressive than the extensive-form game.
Furthermore, we formally prove that a Nash Equilibrium in the converted game corresponds to a TMECor in the original game. In addition, we present vEFG, an alternative description of extensive-form games better suited for the characterization of public information.
Our information-lossless abstractions allow a dramatic reduction of the size of the game tree, producing an abstract game that, in most of our experiments, has a size that is the square root of the size of the original game. Most importantly, our algorithms directly produce the abstract version of the game tree without the need for the original (non-abstracted) game.


Our work builds upon Nayyar et al. in \cite{Nayyar2013DecentralizedSC}; similarly to our work, a shared coordinator, living in the public information state of the team, prescribes an action to the team members for any possible private state allowing the training of a common strategy for the team.  This coordinator-based transformation is recently employed in the fully cooperative setting of \textit{Hanabi} \cite{Foerster2019BayesianAD,Sokota2021SolvingCG}. However, their transformation applies to the context of decentralized stochastic control in a fully cooperative scenario, whereas we generalize such coordinator transformation in an adversarial team game setting.

\section{Preliminaries}\label{sec:prelim}
In this section, we introduce the basic concepts and definitions that we need throughout this work (for more details, we refer the interested reader to \cite{shoham-leyton}).
\paragraph{Extensive-Form Games and Adversarial Team Games}

The basic model for sequential interactions among a set $\mathcal{N}$ of multiple agents with private information is the \emph{Extensive-Form Game with imperfect information} (EFG). 
If chance is present in the game, we enrich the set of players with the chance player $c$, and chance probabilities are denoted as $\sigma_c$. 
An EFG defines a tree where the set of nodes is denoted by $\mathcal{H}$ and the leaves (terminal nodes of the game) are denoted by $\mathcal{Z}\subseteq\mathcal{H}$. 
The player acting at a node $h\in\mathcal{H}$ is identified by the function $\mathcal{P}(h) \in \mathcal{N}$, and the set of actions that she can choose at $h$ are $A(h)$.
Let $u_p: \mathcal{Z}\to\Reals$ be the payoff function of player $p$ that maps every terminal node to a utility value.
In order to account for imperfect information, we use \emph{information sets} (infosets).
An infoset $I\subseteq\mathcal{H}$ is a partition of the nodes of the tree in which a player acts such that they are indistinguishable to her. 
We denote the set of player $p$ infosets as $\mathcal{I}_p$.
With a slight abuse of notation, we denote the player acting at infoset $I$ as $\mathcal{P}(I)$ and her available actions at that infoset as $A(I)$.
Furthermore, the action space is defined as $\mathcal{A} = \bigcup_{p\in\mathcal{N}}\mathcal{A}_p$, where $\mathcal{A}_p = \bigcup_{I\in\mathcal{I}_p} A(I)$.

In this work, we are interested in Adversarial Team Games (ATGs).
An ATG is a $N$-player EFG in which a \emph{team} $\mathcal{T}\subseteq\mathcal{N}$ of players plays against a single opponent ($\mathcal{N} = \mathcal{T}\cup\left\{ o\right\}$, where $o$ is the opponent). 
Here, the team is represented by a set of agents that share the same utility function.
Formally $\forall p \in \mathcal{T}$, $u_p = u_\mathcal{T}$ for some function $u_\mathcal{T}$. 
We restrict our analysis to zero-sum ATG, \emph{i.e.}, games in which $u_{\mathcal{T}} = -u_o$.
For an EFG, a deterministic timing is a labeling of the nodes in $\mathcal{H}$ with non-negative real
numbers such that the label of any node is at least one higher than the label of its parent. 
A deterministic timing is exact if any two nodes in the same information set have the same label. 
An EFG is \emph{1-timeable} if it admits a deterministic exact timing such that each node's label is exactly one higher than its parent's label.
Furthermore, we require that all the players are with \emph{perfect recall}, \emph{i.e.,} they never forget information they previously knew during the game.

\paragraph{Strategies and Nash Equilibrium}
There are several possible ways of representing strategies. 
A \emph{behavioral strategy} $\sigma_p: \mathcal{I}_p\to\Delta^{|A(I)|}$ is a function that maps each infoset to a probability distribution over available actions. 
A \emph{normal-form plan} (or \emph{pure strategy}) $\pi_p\in\Pi_p:= \bigtimes_{I\in\mathcal{I}_p}A(I)$ is a tuple specifying one action for each infoset, while a \emph{normal-form strategy} $\mu_p\in\Delta^{|\Pi_p|}$ is a probability distribution over normal-form plans.
A \emph{reduced normal-form strategy} $\mu^\star_p$ is obtained from a normal-form strategy $\mu_p$ by aggregating plans distinguished by actions played in unreachable nodes.
With a slight abuse of notation, $\forall p \in\mathcal{N}$ we write with $\sigma_p[z]$ (respectively $\mu_p[z]$) the probability of reaching terminal node $z\in\Z$ when following strategy $\sigma_p$ (resp. $\mu_p$). 
A strategy profile is a tuple associating a strategy to each player in the game. 
We denote normal-form strategy profiles as $\boldsymbol{\mu}$ and behavioral strategy profiles as $\boldsymbol{\sigma}$. 
Given a strategy profile $\boldsymbol{\mu}$, we denote with $\mu_p$ the strategy of player $p\in\N$ and with $\boldsymbol{\mu}_{-p}$ the strategies of all the other players. 
With an abuse of notation, the expected utility for player $p$ when she plays strategy $\mu_p$ and all the other players play strategy $\boldsymbol{\mu}_{-p}$ is $u_p(\mu_p, \boldsymbol{\mu}_{-p})$.
Furthermore, we define the \emph{best response} of player $p$ to strategy profile $\boldsymbol{\mu}_{-p}$ as the strategy that maximizes player $p$'s utility against strategy $\boldsymbol{\mu}_{-p}$.
Formally, $BR_p(\boldsymbol{\mu}_{-p}) := \argmax_{\mu} u_p(\mu, \boldsymbol{\mu}_{-p})$.
A strategy profile $\boldsymbol{\mu}$ is a Nash Equilibrium (NE) if it is stable to respect to unilateral deviations of a single player. 
Formally, $\boldsymbol{\mu}$ is a NE if and only if $\forall p \in \mathcal{N}$, $\mu_p \in BR_p(\boldsymbol{\mu}_p)$.

%
%
\section{How to Enforce Coordination Between Team Members}\label{sec:tmecor}

When considering ATGs, the problem of computing optimal strategies for the team of agents becomes inherently much more complex than with 2p0s games. 
This is basically due to team players' need for coordinating their strategies to maximize their utility. 
In practice, such a requirement on coordination is translated into the formulation of an \emph{ad hoc} solution concept for ATGs. 
Von Stengel and Koller, in \cite{VonStengelKoller97}, introduce the \emph{Team Maxmin Equilibrium} (TME), which is defined as the NE in which the team's utility is maximized. 
The solution concept described by the TME exhibits some appealing characteristics: (i) in some cases it can be arbitrarily more efficient in terms of payoffs for the team with respect to a NE, (ii) it is unique (except for degeneracies), thus it does not suffer from equilibrium selection issues \cite{Basilico2016teammaxmin,Basilico2017ComputingTT,Celli2018ComputationalRF}.
However, it is not easy to be computed as it does not admit a convex mathematical formulation \cite{Farina2021ConnectingOE}.

If the coordination of team's strategies is done \emph{ex-ante}\footnote{Note that different forms of coordination are possible, (\emph{e.g.}, intra-play) but they require communication capabilities between the team members that may not be allowed in arbitrary games.} (\emph{i.e.}, before the beginning of each game), a new solution concept can be introduced, called \emph{Team Maxmin Equilibrium with a Correlation device} (TMEcor) \cite{Celli2018ComputationalRF}. 
Such coordination has the advantage to increase the expected utility of the team \cite{Celli2018ComputationalRF}. Moreover, differently from what happens with the TME, the TMEcor can be computed through a Linear Program working with joint normal-form plans of the team players: 
\begin{equation}\label{eq:tmecor}
\begin{array}{l}\displaystyle
\max_{\mu_\team}\min_{\mu_o} \sum_{z\in\Z}\hspace{.4cm} \mu_\team[z]\,\mu_o[z]\,u_\team(z)\\
\hspace{.3cm}\textnormal{s.t. }\hspace{.5cm}\mu_\team\in\Delta(\bigtimes\limits_{p\in\team}\Pi_p)\\[4mm]
\hspace{1.3cm}\mu_o\in\Delta(\Pi_o).
\end{array}
\end{equation}

In general, the size of the strategy space of the team $\bigtimes_{p\in\team} \Pi_p$ grows exponentially in the number of team members, thus increasing the complexity of solving the LP in Equation~\eqref{eq:tmecor}.
There are, however, some specific cases in which we can compute or approximate efficiently the TMEcor, simply by considering all the team members as a unique \emph{meta-player} $\team$ and applying state-of-the-art algorithms for finding NE in 2p0s games.
In order to use these techniques, we need that the meta-player has perfect recall. 
The requirement of perfect recallness for the team is satisfied only when the information available to the team players is \emph{symmetric}, (\emph{i.e.}, when they all observe an action or none of them does).
Unfortunately, this is not usually the case. 
In principle, the sources of imperfect recallness for the team can be several\footnote{We refer the interested reader to Appendix~\ref{app:info_structure} for a more detailed description of the aforementioned cases.
}: (i) non-visibility over a team member's actions, (ii) non-visible game structure, (iii) private information disclosed only to a subset of team members.
Mathematical programming algorithms \cite{Celli2018ComputationalRF,Farina2018ExAC,Farina2021ConnectingOE}  address the problem of imperfect recallness for the team by considering the space of joint strategies. When the support of the TMEcor strategy for the team is small, they can efficiently converge to the equilibrium.
On the other hand, MARL solutions, like SIMS \cite{Cacciamani2021MultiAgentCI}, are capable of solving case (i), but fail to deal with the other sources of imperfect recallness.
In this work, exploiting the concept of \emph{public information}, we introduce a method allowing us to deal with all the three sources of imperfect recallness, thus opening to the possibility of using state-of-the-art RL algorithms (\emph{e.g.}, Deep-CFR\cite{Brown2019DeepCR}) to solve ATGs.
We do this first by defining a game formalism that enriches EFGs with information on public observability of actions, and then by formulating a method that casts the original ATG into an equivalent 2p0s EFG. 
\section{vEFG Representation}\label{sec:vefg}
In this section, we introduce the concept of \emph{Extensive Form Game with visibility} (vEFG), a game representation that overcomes the limits of EFGs, allowing the characterization of the information common to a set of players.
The vEFG is an EFG enriched with a visibility function $Pub_p(a): \mathcal{A}\to\{\text{seen},\text{unseen}\}$ that specifies whether an action $a\in\mathcal{A}$ performed by any player has been observed by player $p$.
With abuse of notation, we define the $Pub$ function also for a set of players $\mathcal{P}\subseteq\mathcal{N}$.
Formally, $Pub_{\mathcal{P}}: \mathcal{A}\to\{\text{pub},\text{priv},\text{hidden}\}$ such that:
\begin{align*}
& Pub_{\mathcal{P}}(a) = \mathrm{pub} \iff \forall p \in \mathcal{P}:Pub_p(a)=\mathrm{seen}\\
& Pub_{\mathcal{P}}(a) = \mathrm{hidden} \iff \forall p \in \mathcal{P}:Pub_p(a) = \mathrm{unseen}\\
& Pub_{\mathcal{P}}(a) = \mathrm{priv} \text{ otherwise}.
\end{align*}

The game tree $\St$ associated with the vEFG is called \emph{public tree}.
A state $S\in\St$ is called \emph{public state}.
Two game histories belong to the same public state if they share the same public actions and differ only by their private actions. 
Formally, $h,h'$ belong to the same public state if and only if ${(a)_{a\in h : Pub_{\mathcal N}(a)=\mathrm{pub}} = (a')_{a'\in h' : Pub_{\mathcal N}(a')=\mathrm{pub}}}$.
We also use $\St(h)$ to indicate the public state associated with a history $h$.
Moreover, given a subset of players $X \subset \mathcal N$, we can define the public tree for $X$ as $\St_X$ such that it holds the following property ${(a)_{a\in h : Pub_{X}(a)=\mathrm{pub}} = (a')_{a'\in h' : Pub_{X}(a')=\mathrm{pub}}}$.

The information set structure $\mathcal I = (\mathcal I_p)_{p\in\mathcal N}$ can be recovered from $Pub_p$, by considering in the same infoset the histories corresponding to the same sequence of actions observed by a player.
Formally, $\forall h,h' \in \mathcal H : h,h' \in I \subset \mathcal I_p$ if and only if $\mathcal P(h) = \mathcal P(h') = p$ and $(a)_{a\in h : Pub_p(a)=\text{seen}} = (a')_{a'\in h' : Pub_p(a')=\text{seen}}$.
Note that in vEFGs we have no notion of forgetting actions, thus imperfect recall situations in which an observation is forgotten by a player cannot be represented.

As the only assumption for our method, we restrict the class of games we consider, introducing the concept of \textit{public turn-taking}.
This concept characterizes games in which the sequence of acting players is common knowledge across all players. 
Intuitively, each player knows when other players played an action in the past, even if the game has imperfect information and the specific action played may be hidden. 
This is a refinement to the concept of 1-timeability, in which not only the length of the history is identical but also the sequence of players.
\begin{definition}[Public turn-taking property]
A vEFG is public turn-taking if:
\begin{equation*}
\forall I \in \mathcal I, \forall h,h' \in I : (\mathcal P(g))_{g \sqsubseteq h} = (\mathcal P(g'))_{g' \sqsubseteq h'}
\end{equation*}
\end{definition}
\begin{restatable}[Transformation into public turn-taking game]{theorem}{sizepublicturn}
Any vEFG can be made public turn-taking by adding player nodes with a single noop action. The size of the resulting game tree is $(|\mathcal N| + 1)|\mathcal H|^2$.
\label{le:turnTaking}
\end{restatable}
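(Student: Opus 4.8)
The plan is to give an explicit insertion of noop nodes driven by two ingredients — a \emph{leveling} of the original tree that is constant on infosets, and a fixed cyclic schedule of the players — and then to count the resulting nodes. For Step~1 I would build a map $\lambda:\mathcal{H}\to\{0,1,\dots,D\}$ with $\lambda(\text{root})=0$, $\lambda(h')>\lambda(h)$ whenever $h'$ is a child of $h$, and $\lambda(h)=\lambda(h')$ whenever $h,h'$ lie in the same infoset. The natural candidate is obtained by contracting every infoset to a single vertex, getting a directed graph $G$ on at most $|\mathcal{H}|$ vertices with edges inherited from the parent/child relation, and setting $\lambda(h)$ to be the length of the longest directed path in $G$ ending at the class of $h$. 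For $\lambda$ to be well defined one needs $G$ acyclic, and then $D<|\mathcal{H}|$ is automatic since a longest path visits each vertex at most once. I expect the acyclicity of $G$ — equivalently, that the ``ancestor of'' order is compatible with the infoset partition — to be the main obstacle, and the step where perfect recall (together with the vEFG feature that a player always observes her own moves) has to be invoked: intuitively, a directed cycle in $G$ would force two players each to believe they move before the other along indistinguishable histories.

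For Step~2 (interleaving), fix an enumeration $q_0,q_1,\dots,q_{|\mathcal{N}|}$ of all players including chance — this is where the factor $|\mathcal{N}|+1$ comes in. In the new game ``global time'' runs $t=1,2,3,\dots$ and at time $t$ the mover is $q_{(t-1)\bmod(|\mathcal{N}|+1)}$. I reschedule each original decision node $h$ with $\mathcal{P}(h)=q_i$ to the time reserved for $q_i$ inside macro-round $\lambda(h)$, namely $t(h):=(|\mathcal{N}|+1)(\lambda(h)-1)+i+1$, and every global time strictly between $t(h)$ and $t(h')$ (for $h$ the parent of $h'$) is occupied by a fresh node with a single noop action played by whichever player is scheduled there. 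Declaring every noop $\mathrm{pub}$ makes a forced one-action move carry no information, so the infosets reconstructed from $Pub$ on the new tree restrict, on the original decision nodes, exactly to the old infosets — here Step~1 is used again, since by perfect recall two histories in one infoset pass through the same $\lambda$-values between consecutive own-moves and hence get the same noops inserted in the same positions. Public turn-taking is then immediate: if $h,h'$ are in the same new infoset then $\lambda(h)=\lambda(h')$ and $\mathcal{P}(h)=\mathcal{P}(h')$, so $t(h)=t(h')$, and $(\mathcal{P}(g))_{g\sqsubseteq h}$ and $(\mathcal{P}(g'))_{g'\sqsubseteq h'}$ are both the length-$t(h)$ prefix of the fixed periodic word $q_0q_1\cdots q_{|\mathcal{N}|}q_0q_1\cdots$, hence equal.

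For Step~3 (size), each original edge $(h,h')$ becomes a path of $t(h')-t(h)\le(|\mathcal{N}|+1)\bigl(\lambda(h')-\lambda(h)\bigr)\le(|\mathcal{N}|+1)D\le(|\mathcal{N}|+1)|\mathcal{H}|$ nodes, and the original tree has fewer than $|\mathcal{H}|$ edges, so the new tree has at most $(|\mathcal{N}|+1)|\mathcal{H}|^2$ nodes; charging each inserted noop to the original node immediately below it gives the bound exactly as stated. Everything here is routine once Step~1 is in place, so the whole difficulty is concentrated in establishing the leveling $\lambda$ with the claimed depth — the acyclicity argument above — after which Steps~2 and~3 are bookkeeping.
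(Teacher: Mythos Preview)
The paper's own argument is a two–sentence sketch and is much lighter than yours: it simply fixes a cyclic order on the $|\mathcal N|{+}1$ players (chance included), walks each original history action by action, and whenever the scheduled player at the current level is not the next mover in that history it inserts a single noop and tries again at the next level. Noops are declared \emph{unseen} by everyone except the player who plays them. Since every original history has length at most $|\mathcal H|$ and each original action is preceded by at most $|\mathcal N|$ noops, each of the $|\mathcal H|$ histories contributes $\mathcal O((|\mathcal N|{+}1)|\mathcal H|)$ new nodes, giving the bound. There is no infoset–constant leveling $\lambda$ anywhere in the paper's proof — the depth in the new tree is driven directly by the original action sequence — so your Step~1 is entirely additional, and your visibility choice (noops $\mathrm{pub}$) is the opposite of the paper's.

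That additional Step~1 is precisely where your plan breaks, and not in a way that perfect recall can repair. Acyclicity of the contracted graph $G$ does \emph{not} follow from perfect recall together with the vEFG convention that a player sees her own moves. Counterexample: chance picks $L$ or $R$, unseen by both $p$ and $q$; on $L$ player $p$ moves and then $q$, on $R$ player $q$ moves and then $p$; neither player ever sees the other's action. Each player acts exactly once, so perfect recall is vacuous, yet $p$'s unique infoset $I_p$ and $q$'s unique infoset $I_q$ give edges $I_p\to I_q$ (from the $L$ branch) and $I_q\to I_p$ (from the $R$ branch), a $2$–cycle in $G$. Consequently no labeling with $\lambda$ constant on infosets and strictly increasing along edges exists for this game, and your longest–path construction is undefined. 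Your own intuition — that a cycle would ``force two players each to believe they move before the other'' — is exactly what this example exhibits, and perfect recall does not forbid it when the two infosets on the cycle belong to \emph{different} players. The paper's sketch simply does not attempt the careful infoset–preservation you are aiming for, which is why it never runs into this obstacle; whether its bare assertion of strategic equivalence actually survives examples like the one above is a separate question from the size bound it proves.
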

The proof is in Appendix~\ref{app:proofs}.
\section{Public Team Conversion Algorithm}\label{sec:conversion}

We present the algorithmic procedure to convert an ATG into a 2p0s game, in which a coordinator player takes the strategic decision on behalf of the team. We assume that the ATG is a vEFG with the public-turn-taking property. The pseudo-code is shown in Algorithm~\ref{alg:PubTeam}.
\begin{wrapfigure}[37]{R}{.50\textwidth}
\begin{minipage}{.5\textwidth}
\begin{algorithm}[H]
\caption{Public Team Conversion}\label{alg:PubTeam}
\begin{algorithmic}[1]
	\Function {ConvertGame}{$\mathcal G$}
		\State initialize $\mathcal G'$ new game 
		\State $\mathcal N' \gets \{t,o\}$ 
		\State $h'_{\emptyset} \gets$  \Call{PubTeamConv}{$h_{\emptyset}, \mathcal G, \mathcal G'$} 
		\State \Return{$\mathcal G'$}
	\EndFunction
	\item[] 
	\Function{PubTeamConv}{$h$, $\mathcal G$, $\mathcal G'$}
		\State initialize $h' \in \mathcal H'$
  		\If {$h \in \mathcal Z$} 
  			\State $h' \gets h' \in \mathcal Z'$
  			\State $u'_p(h') \gets u_p(h) \quad \forall p \in \mathcal N$
  		\ElsIf {$\mathcal P(h) \in \{o, c\}$} 
  			\State $\mathcal P'(h') \gets \mathcal P(h)$
  			\State $\mathcal A'(h') \gets \mathcal A(h)$
  			\If {$h$ is chance node}
  				\State $\sigma_c'(h') = \sigma_c(h)$
  			\EndIf
  			\For {$a' \in \mathcal A'(h')$}
  				\State $Pub_t'(a') \gets$ check $Pub_{\mathcal T}(a') = \mathrm{pub}$
  				\State $Pub_o'(a') \gets Pub_o(a')$
  				\State $h'a' \gets$ \Call{PubTeamConv}{$ha'$, $\mathcal G$, $\mathcal G'$}
  			\EndFor 
  		\Else 
  			\State $\mathcal P'(h') = t$
  			\State $\mathcal A'(h') \gets \bigtimes_{I \in \mathcal  S_{\mathcal T}(h)} \mathcal A(I)$ \label{lst:line:prescription} 
  			\For {$\Gamma' \in \mathcal A'(h')$}
  				\State $Pub_t'(\Gamma') \gets \mathrm{seen}$
  				\State $Pub_o'(\Gamma') \gets \mathrm{unseen}$
  				\State $a' \gets \Gamma'[\mathcal I(h)]$ 
  				\State initialize $h'' \in \mathcal H'$
  				\State $\mathcal A'(h'') \gets \{a'\}$
 				\State $\mathcal P(h'') = c$
  				\State $Pub_t'(a') \gets \mathrm{seen}$
 				\State $Pub_o'(a') = Pub_o(a')$
 				\State $\sigma_c'(h'') =$ play $a'$ with probability 1
  				\State $h''a' \gets$ \Call{PubTeamConv}{$ha'$, $\mathcal G$, $\mathcal G'$}
  				\State $h'\Gamma \gets h''$ 
  			\EndFor 
  		\EndIf
  		\State \Return $h'$
	\EndFunction
\end{algorithmic}
\end{algorithm}
\end{minipage}
\end{wrapfigure}
The algorithm recursively traverses the original game tree in a post-order depth-first fashion: for each traversed node, corresponding nodes are instantiated in the converted game. The chance, terminal, and adversary nodes are copied unaltered since the coordinator player $t$ has only access to the public information visible to the team members. 
Each team member node is instead mapped to a new coordinator node, in which he plays a prescription $\Gamma$ among the combination of possible actions for each information state $I$ belonging to the public team state. 
Given a public state $S$, the coordinator issues to the players different recommendations for every possible information set belonging to $S$.
As an example, consider to be in a state of the original game in which one of the team players has to choose between the set of actions $\mathcal{A} = \{a_1,...,a_n\}$ and that her private state can be either $S_1$ or $S_2$. 
Then, the recommendation issued by the coordinator is a tuple that specifies one action for each possible private state (\emph{e.g.,} a valid recommendation would be to play action $a_i$ when the private state is $S_1$ and to play action $a_j$ when the private state is $S_2$).
Once the original ATG has been transformed into a 2p0s game, we solve the latter with state-of-the-art algorithms for this class of games (\emph{e.g.,} CFR, CFR+, MCCFR, DeepCFR \cite{Zinkevich2007RegretMI,Tammelin2014SolvingLI,Lanctot2009MonteCS,Brown2019DeepCR}).
We show an example of a two-player team game and its converted game in Appendix~\ref{app:figures}.
To prove that our game transformation is effective, we need to show the equivalence between a Nash Equilibrium in the converted game and the TMEcor in the original vEFG. 
The proofs can be found in Appendix~\ref{app:proofs}
\begin{restatable}{theorem}{equivalenceNE}
Given a public-turn-taking vEFG $\mathcal G$, and the correspondent converted game $\mathcal G' = \algoname{ConvertGame}(\mathcal G)$, a Nash Equilibrium $\mu_t^*$ in $\mathcal G'$ corresponds to a TMECor $\mu_\team^* = \sigma(\mu_t^*)$ in $\mathcal G$.
\label{th:main}
\end{restatable}
Consider now the specific case in which team members have \textit{common external information}, that means that team members have the same observation of the actions made by adversary and chance players, formally, $\nexists a \in \mathcal A_c \cup \mathcal A_o: Pub_{\mathcal T}(a) = hidden$. The only source of imperfect recallness is the non-observation over a team member's action. To deal with this case, we can resort to \cite{Kaneko1995BehaviorSM}, where the authors provide a polynomial-time algorithm to find an equilibrium. 
As in their case, our conversion algorithm outputs a game tree that has a number of nodes comparable to the one of the original game, thus enabling efficient approximation of the equilibrium. 
This happens because, at the decision nodes of the coordinator, we do not have the combinatorial explosion deriving from the presence of different possible private states as players do not have private information.
%

%
\section{Information-lossless Abstractions}\label{sec:pruning}

The conversion procedure presented above allows us to prove the equivalence between the original ATG and its converted version. A crucial question is whether we can produce a reduced, more compact variant of our representation. 
The main reason is that, indeed, applying Algorithm~\ref{alg:PubTeam} to a public state in the original game with $A$ actions for $S$ possible private states results in $A^S$ prescriptions, producing a tree with an exponentially large fan-out.

We propose three algorithms, each returning an information-lossless abstraction of the converted game to attenuate the computational burden of the converted game allowing the application of existing algorithms for equilibrium computation or approximation.

\textbf{Pruned Representation}. 
Whenever a prescription is issued to a team member that has to play a public action, the played action can be used to exclude the private states for which a different action has been given. 
This practice allows us to safely exclude a subset of the possible private states reducing the number of possible prescriptions in successive nodes.
%

Such a representation can be obtained in a \textit{online} fashion (\emph{i.e.}, without the need for generating the complete tree and subsequently pruning it) by modifying Algorithm~\ref{alg:PubTeam}. In particular, we add a parameter $\mathcal X$ in \algoname{PubTeamConv} storing the excluded private states for the team. Those private states are the ones for which an action was prescribed but then a different action has been played. By excluding every information set in $\mathcal X$ when building the prescription in Line~\ref{lst:line:prescription}, we can effectively shrink the number of private states to be considered by the coordinator.

An algorithm to recursively convert an ATG into its pruned representation is presented in Appendix~\ref{app:pruned_PubTeam}.

\textbf{Folding Representation}. Consider the case of the extraction of a specific chance outcome when this is private information for a team member and unseen by the adversary (\emph{e.g.}, a private card in poker). We can avoid adding this node to the converted game and instead maintain the probability that each player is in a specific private state, given that some may be excluded as in the pruned representation. Then, we modify the chance node $h''$ after each prescription to sample an action according to the sum of the probabilities of the private states for which that action has been prescribed.
%
Overall, the game size is reduced in case many private states for the players are sampled. Such a representation is similar to the public belief state representation proposed in \cite{Brown_Bakhtin_Lerer_Gong_2020}. The name folding representation derives from the fact that trajectories with the same public actions but different private states are folded one over the other in the converted game.

\textbf{Safe Imperfect Recallness}. 
Whenever a team player is in a state with three or more actions available, a specific action is played and some possible private states are excluded from the belief. 
The specific actions prescribed for the excluded states are not important to describe the information state of the player; we can therefore forget part of the prescription and have imperfect recall among different prescriptions with different prescribed actions for excluded states. 
While this abstraction technique does not directly reduce the number of nodes, it reduces the number of information sets, simplifying the information structure of the game. 
This reduces the space requirements to represent the strategies and makes the algorithms converge faster. 
This abstraction technique is theoretically sound, and the convergence properties of CFR in this imperfect-recall setting have already been addressed by \cite{Lanctot_Gibson_Burch_Zinkevich_Bowling_2012}.

Examples of the application of the techniques described above are shown in Appendix~\ref{app:figures}. 
\section{Experimental Evaluation}\label{sec:experiments}
\begin{table*}[!b]
\begin{subtable}{.45\textwidth}
  \resizebox{\columnwidth}{!}{%
\begin{tabular}{rrrrr}
\toprule
$H$  & normal   & basic    & pruning  & folding  \\
\midrule
2  & 6.40E+01 & 2.70E+01 & 2.70E+01 & 7.50E+01 \\
4  & 4.10E+03 & 1.76E+03 & 5.19E+02 & 1.46E+03 \\
6  & 2.62E+05 & 1.12E+05 & 5.18E+03 & 1.48E+04 \\
8  & 1.68E+07 & 7.19E+06 & 3.92E+04 & 1.13E+05 \\
10 & 1.07E+09 & 4.60E+08 & 2.55E+05 & 7.40E+05 \\
12 & 6.87E+10 & 2.95E+10 & 1.51E+06 & 4.41E+06 \\
14 & 4.40E+12 & 1.88E+12 & 8.40E+06 & 2.46E+07 \\
\bottomrule
\end{tabular}
}
  \caption{only $P1$ has private information}
  \label{fig:sub1}
\end{subtable}
\begin{subtable}{.45\textwidth}
  \resizebox{\columnwidth}{!}{%
\begin{tabular}{rrrrr}
\toprule
   $H$ &   normal &    basic &  pruning &  folding \\
\midrule
   2 & 6.40E+01 & 8.10E+01 & 8.10E+01 & 7.50E+01 \\
   4 & 4.10E+03 & 5.26E+03 & 1.56E+03 & 1.46E+03 \\
   6 & 2.62E+05 & 3.37E+05 & 1.55E+04 & 1.48E+04 \\
   8 & 1.68E+07 & 2.16E+07 & 1.17E+05 & 1.13E+05 \\
  10 & 1.07E+09 & 1.38E+09 & 7.65E+05 & 7.40E+05 \\
  12 & 6.87E+10 & 8.84E+10 & 4.53E+06 & 4.41E+06 \\
  14 & 4.40E+12 & 5.65E+12 & 2.52E+07 & 2.46E+07 \\
\bottomrule
\end{tabular}
}
  \caption{both $P1$ and $P2$ have private information}
  \label{fig:sub2}
\end{subtable}
\caption{Comparison of total number of nodes for $C=3$, $A=2$.}
\label{fig:pruning}
\end{table*}
\begin{figure*}[!b]
\centering
\begin{subfigure}{0.32\textwidth}
  \centering
  \includegraphics[width=\textwidth]{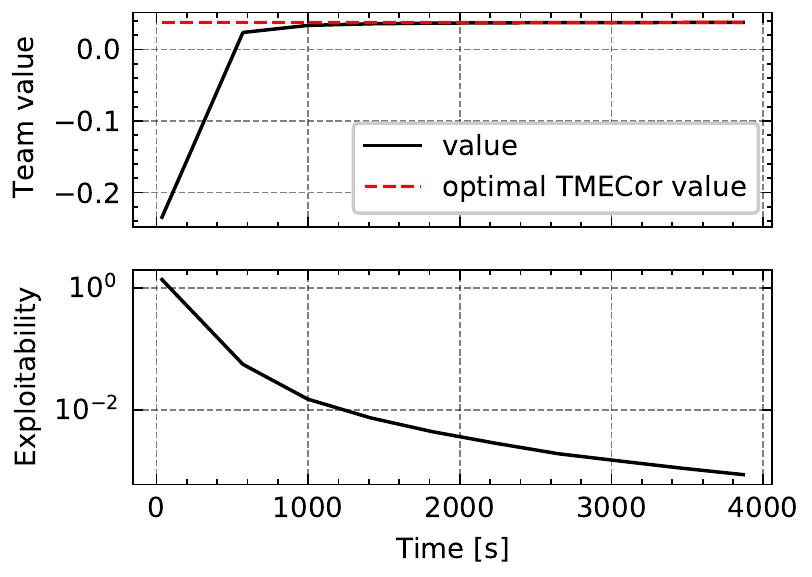}
  
  \small{Peak RAM occupation: 96 MB}
  \caption{3-player Kuhn poker, with 4 card ranks and adversary playing first.}
  \label{fig:poker1}
\end{subfigure}%
\hspace*{0.01\textwidth}%
\begin{subfigure}{.32\textwidth}
  \centering
  \includegraphics[width=\textwidth]{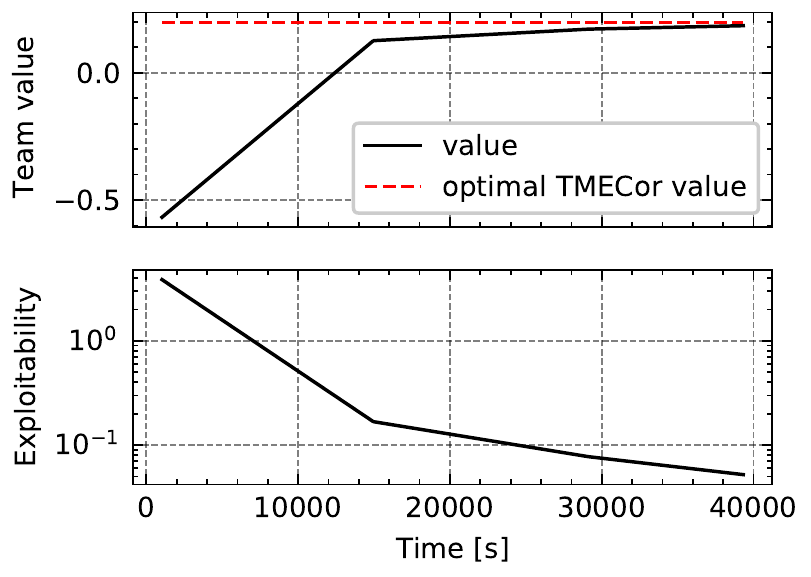}
  
  \small{Peak RAM occupation: 912 MB}
  \caption{3-player Leduc poker, with 3 card ranks, 1 raise, and adversary playing first.}
  \label{fig:poker2}
\end{subfigure}
\hspace*{0.01\textwidth}%
\begin{subfigure}{.32\textwidth}
  \centering
  \includegraphics[width=\textwidth]{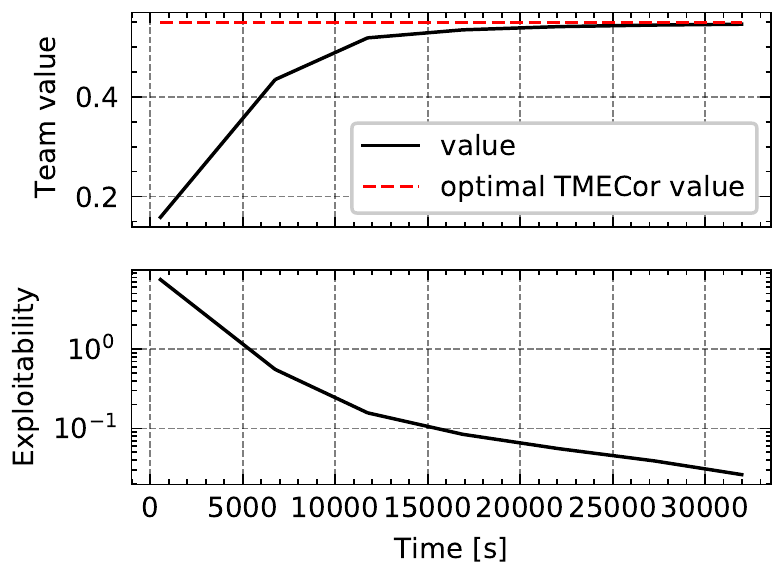}
  
  \small{Peak RAM occupation: 599 MB}
  \caption{3-player Leduc poker, with 2 card ranks, 2 raises, and adversary playing first.}
  \label{fig:poker3}
\end{subfigure}
\caption{Performances of Linear CFR+ applied to Poker game instances.}
\end{figure*}
In this section, we evaluate the benefits of the lossless abstraction techniques when applied to a toy game, and present the application of our conversion procedure to multiplayer instances of Kuhn and Leduc poker.
Details on the experimental setting and further results are shown in Appendix~\ref{app:exp}.
\subsection{Impact of Abstraction Techniques}
To evaluate the impact of information-lossless abstraction techniques, we designed a simple parametric game for which we are able to determine in closed-form the total number of nodes.
For the sake of simplicity, we can ignore the presence of an opponent (recall that opponent and chance nodes are copied as they are, hence they do not have an impact on the increase in size of the converted game).
The game we considered is a two-player game $\mathcal G$, with players $P1$ and $P2$, in which there are $C$ chance outcomes at root, observed by $P1$ and not by $P2$. 
After observing the outcome of the chance node, $P1$ acts sequentially $H$ times, each time choosing between $A$ actions.
$P2$ does not observe the actions performed by $P1$ and acts at the last round of the game, choosing between $A$ actions. 

We evaluate the size of the games obtained by applying the basic algorithm and the sizes of the games obtained through the abstraction techniques presented in Section~\ref{sec:pruning} and denoted as \emph{pruning} and \emph{folding}. 
Furthermore, as a baseline, we show the number of normal-form plans. 
Table~\ref{fig:pruning} shows the results for $C=3$, $A = 2$ and for various values of the parameter $H$\footnote{The complete results are shown in Appendix~\ref{app:pruning}.}.
The derivation of the equation that shows the total number of nodes is shown in Appendix~\ref{app:pruning}.

We observe that the \emph{pruning} technique is particularly effective at dampening the exponential factor due to the combinatorial structure of prescriptions.
Moreover, the \emph{folding} technique combines the benefits of the \emph{pruning} technique, with the specific tradeoff imposed by the use of delayed chances. 
Indeed, all the histories belonging to the same public state are represented in a single node, at the cost of a chance node added after each prescription. 
This tradeoff is useful for game states in which many private states are possible. 
On the other hand, when the possible private states are reduced to one or two, the extra chance nodes increase the size of the game more than in the \emph{pruning} representation.
%

\subsection{Application to Team Poker Games}
To test the effectiveness of our approach in a real case, we apply our conversion procedure to the multiplayer versions of Kuhn poker and Leduc poker. 
The Kuhn instances we use are parametric in the number of ranks available, and on whether the adversary plays first, second or third in the game. 
Similarly, Leduc instances are parametric on the number of ranks, on the position of the adversary, but also on the number of raises that can be made.
Figures~\ref{fig:poker1}, ~\ref{fig:poker2}, ~\ref{fig:poker3} show the convergence performances on the converted games of Linear CFR+ \cite{Brown2019SolvingIG}. %
We adopted three-player versions of the two games in which there are two players \emph{colluding} against the other.
The converted games were implemented using the pure python API provided by Open Spiel~\cite{lanctot2019openspiel}. We took advantage of the already available 2p0s solvers.
More details on the experimental setting are available in Appendix \ref{app:exp}, while Appendix \ref{app:sizes} shows the sizes of the converted games for a varying number of ranks and raises.

We report a paired plot showing both the value and exploitability convergence, along with the optimal value of a TMECor as computed by \cite{Farina2021ConnectingOE}, represented as horizontal dashed lines in the team value plot.
The experimental performances are coherent with the theoretical result of Theorem~\ref{th:main}; the Nash Equilibrium returned by the algorithm has the same value of the TMECor of the original game in all the tests while the exploitability values of the average strategy converge to $0$, indicating convergence to the equilibrium.
We remark that the time spent by Linear CFR+ applied to our converted game is higher than the running time required by the algorithm presented  in~\cite{Farina2021ConnectingOE}. This is due to the use of different programming languages for the two algorithms (as aforementioned, our algorithms are coded in Python, whereas the algorithms in~\cite{Farina2021ConnectingOE} are coded in C++ and use commercial solvers), while the question of whether even more efficient abstractions can be formulated is still open.
%
\section{Conclusions}\label{sec:conclusion}
We presented a conversion procedure to transform an adversarial team game in a 2p0s game and proved that a NE in the converted game corresponds to a TMECor in the original game. This conversion procedure is based on the public information among team members, which allows to remove each player's private information in the converted representation at the cost of a exponential increase in size due to the \textsf{NP}-hardness of computing a TMECor in a generic team game.
We propose three pruning techniques to obtain a resulting tree which is more computationally tractable.

This conversion retains the public structure of the original game, and allows the application of more scalable techniques for the resolution of adversarial team games, such as abstractions and continual resolving. 
Future works leveraging our techniques in the converted game have therefore the opportunity to overcome current state of the art techniques on larger game instances.
Another direction is the development of a conversion procedure to produce a pruned or folded representation in online fashion, avoiding the construction of the entire tree.
Additionally, in principle, our method can be adopted also in $n$ players vs $m$ players games. In such scenarios the support at the equilibrium is no longer small and mathematical programming algorithms loose some efficiency. 
Solving team vs team games with a public information approach could therefore be a very interesting future direction. 
\clearpage
\bibliographystyle{abbrvnat}
\bibliography{neurips_2021}
%

\clearpage
\appendix
\section{Proofs} \label{app:proofs}
\sizepublicturn*
\begin{proof}[Sketch of the Proof]
A very simple procedure that allows us to prove the lemma is the following: we can set for each level of the converted game a player, by cycling through all players (chance included). Then, we can add all the histories of the original game one by one, while imposing that at each level only the chosen player can play. If the history has no action assigned to the level's player, then we can add a dummy player node, with only a noop operation, and try to prosecute with the actions of the original history in the next node. The visibility of the noop actions is unseen for all players apart from the one playing.

This procedure guarantees to get a strategically equivalent game by adding at most $\mathcal O((|\mathcal N| + 1)|\mathcal H|)$ for any of the $|\mathcal H|$ histories in the original game. This proves that the number of histories in the converted game is $\mathcal O((|\mathcal N| + 1)|\mathcal H|^2)$
\end{proof}

The following lemmas are instrumental to derive the main result. In the following, we make use of reduced normal form plans, and we refer to them as plans and pure strategies for clarity. We also drop the superscript ``$^\star$'' to simplify the notation.
\begin{restatable}{lemma}{lemmaunoeq}
Given a public-turn-taking vEFG $\mathcal G$, and the correspondent converted game $\mathcal G' = \algoname{ConvertGame}(\mathcal G)$, each joint pure strategy $\pi_{\mathcal T}$ in $\mathcal G$ can be mapped to a strategy $\pi_t$ in $\mathcal G'$, such that the traversed histories have been mapped by \algoname{PublicTeamConversion}. Formally, $\forall \pi_{\mathcal T} \; \exists \pi_t$ such that $\forall \pi_o,\pi_c$:

\begin{equation*}
\begin{array}{c}
(\algoname{PubTeamConversion}(h))_{h \text{ reached by playing } (\pi_{\mathcal T}, \pi_o, \pi_c) \text{ in } \mathcal G }\\
\equiv\\
(h')_{h'\text{ reached by playing } (\pi_t, \pi_o, \pi_c) \text{ in } \mathcal G' }. 
\end{array}
\end{equation*}
\label{th:lemma1}
\end{restatable}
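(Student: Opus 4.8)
The plan is to construct $\pi_t$ explicitly by ``bundling'' the actions that $\pi_\team$ prescribes at the team information sets of a public state into a single prescription, and then to prove the claimed history correspondence by induction on the original game tree. Fix the reduced normal-form plan $\pi_\team$, i.e.\ a choice of an action $\pi_\team(I)\in\mathcal A(I)$ for every team information set $I$ reachable under $\pi_\team$. By inspection of \algoname{PubTeamConv}, a team history $h$ with $\mathcal P(h)\in\team$ is mapped to a coordinator node $h'=\algoname{PubTeamConv}(h)$ whose actions are the prescriptions $\Gamma\in\bigtimes_{I\in\St_\team(h)}\mathcal A(I)$, and choosing $\Gamma$ there leads, through a one-action chance node $h''$, to $\algoname{PubTeamConv}(ha)$ with $a=\Gamma[\mathcal I(h)]$; chance, opponent and terminal nodes are copied verbatim, so the coordinator plays nowhere else. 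I therefore define $\pi_t$ at every reachable coordinator node $h'$ to be the prescription $\Gamma^{h'}$ with $\Gamma^{h'}[I]=\pi_\team(I)$ for all $I\in\St_\team(h)$ (for the information sets of $\St_\team(h)$ that the reduced plan $\pi_\team$ leaves unspecified the value is irrelevant; pick any, or first complete $\pi_\team$ to a full normal-form plan).

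\textbf{$\pi_t$ is a legal strategy.} It remains to check that $\Gamma^{h'}$ depends only on the coordinator's information at $h'$. Under the public-turn-taking hypothesis --- which we may assume by Theorem~\ref{le:turnTaking} --- all histories in a public team state share the same acting player, so $\St_\team(h)$ is a well-defined set of team information sets depending only on the public team state of $h$. Moreover the coordinator in $\mathcal G'$ observes, in particular, every team-public action of $\mathcal G$ (team-public external actions are marked seen by the ``check $Pub_\team=\mathrm{pub}$'' step, and a realized team action $a'=\Gamma'[\mathcal I(h)]$ is marked seen), so the coordinator's information at $h'$ determines the public team state of $h$, hence the index set of $\Gamma^{h'}$; since $\pi_\team$ is a fixed function, two coordinator nodes in the same information set receive the same prescription. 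Thus $\pi_t$ is constant on coordinator information sets.

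\textbf{History correspondence.} Fix $\pi_o$ and a pure chance realization $\pi_c$. I claim that the path traversed by $(\pi_\team,\pi_o,\pi_c)$ in $\mathcal G$ and the path traversed by $(\pi_t,\pi_o,\pi_c)$ in $\mathcal G'$ correspond under $h\mapsto\algoname{PubTeamConv}(h)$, up to the auxiliary single-action chance nodes $h''$ inserted after each prescription --- this is the content of the ``$\equiv$'' in the statement. Proceed by induction on depth. The root case holds because $h'_\emptyset=\algoname{PubTeamConv}(h_\emptyset)$. For the step, let $h$ be on the path and $h'=\algoname{PubTeamConv}(h)$ the matching node on the $\mathcal G'$-path: if $h\in\Z$ we are done; if $\mathcal P(h)\in\{o,c\}$, then $h'$ has the same player, action set and chance distribution, so the action $a$ dictated by $\pi_o$ or $\pi_c$ leads in both games to $ha$ and to $h'a=\algoname{PubTeamConv}(ha)$; if $\mathcal P(h)\in\team$, then $\mathcal G$ moves to $ha$ with $a=\pi_\team(\mathcal I(h))$, while in $\mathcal G'$ the coordinator plays $\Gamma=\pi_t(h')$ with $\Gamma[\mathcal I(h)]=\pi_\team(\mathcal I(h))=a$ by construction, after which the inserted chance node deterministically plays $a$ and reaches $\algoname{PubTeamConv}(ha)$. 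In all cases the successors are again matched, which closes the induction.

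\textbf{Main obstacle.} The inductive case analysis and the verbatim copying of opponent/chance nodes are routine; the genuinely delicate point is the legality argument for $\pi_t$, i.e.\ identifying exactly what the coordinator observes in $\mathcal G'$ and showing that histories sharing a coordinator information set necessarily share the public team state (so the prescription menu, and hence $\Gamma^{h'}$, is the same) --- this is precisely where the public-turn-taking assumption is used. A secondary, purely bookkeeping nuisance is keeping track of the auxiliary chance nodes so that the ``$\equiv$'' in the statement is interpreted and discharged correctly.
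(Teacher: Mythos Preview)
Your proof is correct and follows essentially the same approach as the paper: you build the prescription $\Gamma=(\pi_\team[I])_{I\in\St_\team(h)}$ at each coordinator node and verify the history correspondence by an induction with the same terminal/opponent-chance/team case split. Your treatment is in fact more careful than the paper's, which simply writes $\pi_t[\mathcal I'(h')]=\Gamma$ without explicitly checking that this assignment is consistent across coordinator information sets; your legality paragraph fills that gap.
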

\begin{proof}
We can prove Lemma \ref{th:lemma1} recursively by traversing both $\mathcal G$ and $\mathcal G'$ while constructing the equivalent pure strategy in the converted game. We start by $h_\emptyset$ and $h_\emptyset'$. We know that $h_\emptyset' = \algoname{PublicTeamConversion}(h_\emptyset)$.

Let $h$ and $h'$ be the nodes currently reached recursively in $\mathcal G$ and $\mathcal G'$, such that $h' = \algoname{PublicTeamConversion}(h)$, with the guarantee that correspondent histories in the trajectories traversed up to this point in the two games have such a property. We thus have the guarantee that $h$ and $h'$ are both terminal or both share the same player.
Then:
\begin{itemize}
\item Case \textbf{team member node}

Let $a = \pi_{\mathcal T}[\mathcal I (h)]$ be the action specified by $\pi_{\mathcal T}$ to be taken at $\mathcal I(h)$. We can construct a prescription $\Gamma = (\pi_{\mathcal T}[I])_{I \in \mathcal S[h]}$ equivalent to the pure strategy $\pi_{\mathcal T}$ in this public state. We set $\pi_t[\mathcal I'(h')] = \Gamma$, and prosecute our proof from the two reached nodes $h'\Gamma a$ and $ha$. The construction procedure \algoname{PublicTeamConversion} guarantees In fact that $h'\Gamma a = \algoname{PubTeamConversion}(ha)$.

\item Case \textbf{chance or opponent node}

$\pi_o$ and $\pi_c$ are common to both the traversals. This guarantees that the action $a$ suggested by the policy is equal, and by construction of the conversion procedure $h'a' = \algoname{PubTeamConversion}(ha)$. We can thus proceed with the proof considering $h'a$ and $ha$.

\item Case \textbf{terminal node}

By construction, they have the same value for all players. This concludes the recursive proof.
\end{itemize}
\end{proof}
\begin{restatable}{lemma}{lemmadueeq}
Given a public-turn-taking vEFG $\mathcal G$, and the corresponding converted game $\mathcal G' = \algoname{ConvertGame}(\mathcal G)$, each coordinator pure strategy $\pi_t$ in $\mathcal G'$ can be mapped to a strategy $\pi_{\mathcal T}$ in $\mathcal G$, such that the traversed histories have been mapped by \algoname{PublicTeamConversion}. Formally, $\forall \pi_t \; \exists \pi_{\mathcal T}$ such that $\forall \pi_o,\pi_c$:
\begin{align*}
\begin{array}{c}
(\algoname{PubTeamConversion}(h))_{h \text{ reached by playing } (\pi_{\mathcal T}, \pi_o, \pi_c) \text{ in } \mathcal G }\\
\equiv\\
(h')_{h' \text{ reached by playing } (\pi_t, \pi_o, \pi_c) \text{ in } \mathcal G'} .
\end{array}
\end{align*}
\label{th:lemma2}
\end{restatable}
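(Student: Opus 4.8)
The plan is to mirror, in reverse, the recursive argument used for Lemma~\ref{th:lemma1}. There we turned a joint team plan $\pi_{\mathcal T}$ into a coordinator plan $\pi_t$ by \emph{bundling} the per-infoset actions of a public state into a single prescription; here we must \emph{unbundle} the prescriptions of $\pi_t$ back into a joint team plan. Concretely, I would first build $\pi_{\mathcal T}$ globally, information set by information set: every team information set $I$ of $\mathcal G$ belongs to exactly one team public state $S$, and by construction of \algoname{PubTeamConversion} (line~\ref{lst:line:prescription}) that public state is precisely the one over which the coordinator's prescriptions range; letting $\Gamma'$ be the prescription that $\pi_t$ plays at the coordinator decision node associated with $S$, I set $\pi_{\mathcal T}[I] := \Gamma'[I]$ (and pick an arbitrary legal action when that coordinator node is not reached by $\pi_t$ --- for a reduced plan this is exactly the case in which $I$ is not reached by $\pi_{\mathcal T}$ either, so the choice is immaterial).

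With $\pi_{\mathcal T}$ fixed, I would then establish the trajectory correspondence by the same induction on depth as in the proof of Lemma~\ref{th:lemma1}. Maintaining the invariant that the current nodes $h$ in $\mathcal G$ and $h'$ in $\mathcal G'$ satisfy $h' = \algoname{PubTeamConversion}(h)$, the case split is: (i) \emph{team member node} --- by the definition above $\pi_t$ plays at $h'$ a prescription $\Gamma'$ whose $\mathcal I(h)$-component is $\pi_{\mathcal T}[\mathcal I(h)] =: a$, and the conversion then forces action $a$ at the auxiliary chance node and routes to $\algoname{PubTeamConversion}(ha)$, so we continue from $ha$ and $\algoname{PubTeamConversion}(ha)$; (ii) \emph{chance or opponent node} --- $\pi_o,\pi_c$ are shared, the conversion copies the node and its visibility, and again $h'a = \algoname{PubTeamConversion}(ha)$; (iii) \emph{terminal node} --- payoffs coincide by construction, which closes the induction.

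The delicate point, and the step I expect to be the real obstacle, is the \textbf{well-definedness of the unbundling}. A single team information set $I$ may be reached along different plays of $\mathcal G$ (for different $\pi_o,\pi_c$, and for different private realizations of the teammates of $I$'s owner), and a priori these plays could be routed inside $\mathcal G'$ to \emph{different} coordinator information sets that $\pi_t$ answers with prescriptions disagreeing on their $I$-component; if that happened, no joint plan $\pi_{\mathcal T}$ could match all trajectories simultaneously. One therefore has to argue that this cannot occur --- that all prescriptions issued by $\pi_t$ at coordinator information sets whose public state contains $I$ assign the same action to $I$. This is where the \emph{public-turn-taking} assumption and the exact bookkeeping of \algoname{PubTeamConversion} enter: public turn-taking makes the sequence of acting players common knowledge, so that the coordinator's information at a decision node is controlled by the team public state together with its own deterministically replayed past prescriptions, while perfect recall of each individual team member forces the projection onto $I$ to be constant across all such nodes.

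I would isolate this consistency fact as a short sublemma about the information structure of the converted game and then feed it into the induction above. Together with Lemma~\ref{th:lemma1}, the present lemma yields the two-way correspondence between joint team plans of $\mathcal G$ and coordinator plans of $\mathcal G'$ (and, by taking mixtures, between team normal-form strategies and coordinator normal-form strategies with identical terminal-node reach probabilities against any $\pi_o$), which is the ingredient Theorem~\ref{th:main} ultimately rests on.
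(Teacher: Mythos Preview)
Your recursive induction—case-splitting on team, chance/opponent, and terminal nodes while maintaining the invariant $h' = \algoname{PubTeamConversion}(h)$—is exactly the argument the paper gives; the only difference is that the paper constructs $\pi_{\mathcal T}$ on the fly during the traversal (setting $\pi_{\mathcal T}[\mathcal I(h)] := \Gamma[\mathcal I(h)]$ with $\Gamma = \pi_t[\mathcal I'(h')]$ at each team node it visits) rather than defining it globally up front, and it neither isolates nor addresses the well-definedness issue you single out as the delicate point. Your proposal therefore matches the paper's proof and is, if anything, more careful about a consistency check the paper leaves implicit.
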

\begin{proof}
We can prove Lemma \ref{th:lemma2} recursively by traversing both $\mathcal G'$ and $\mathcal G$ while constructing the equivalent pure strategy in the original game. We start by $h_\emptyset'$ and $h_\emptyset$. We know that $h_\emptyset' = \algoname{PublicTeamConversion}(h_\emptyset)$.

Let $h'$ and $h$ be the nodes currently reached recursively in $\mathcal G'$ and $\mathcal G$, such that $h' = \algoname{PublicTeamConversion}(h)$, and with the guarantee that correspondent histories in the trajectories traversed in the two games have such a property. We thus have the guarantee that $h$ and $h'$ are both terminal or both share the same player. Then:
\begin{itemize}
\item Case \textbf{team member node}

Let $\Gamma = \pi_t[\mathcal I(h')]]$ be the prescription specified by $\pi_t$ to be taken at $\mathcal I'(h')$. We can extract the prescribed action $a = \Gamma[I]$ to be played in history $h$. We set $\pi_{\mathcal T}[\mathcal I(h)] = a$, and prosecute our proof from the two reached nodes $h'\Gamma a$ and $ha$. The \algoname{PublicTeamConversion} procedure guarantees In fact that $h'\Gamma a = \algoname{PubTeamConversion}(ha)$.

\item Case \textbf{chance or opponent node}

$\pi_o$ and $\pi_c$ are common to both the traversals. This guarantees that the action $a$ suggested by the policy is equal, and by construction of the conversion procedure $h'a' = \algoname{PubTeamConversion}(ha)$. We can thus proceed with the proof considering $h'a$ and $ha$.

\item Case \textbf{terminal node}

By construction, they have the same value for all players. This concludes the recursive proof.
\end{itemize}
\end{proof}
Thanks to Lemmas~\ref{th:lemma1}~and~\ref{th:lemma2}, we can define the following functions that are used to map strategies from the original game to the converted game: 
\begin{definition}[Mapping functions among original and converted game]
We define:
\begin{itemize}
\item $\rho: \Pi_{\mathcal T} \to \Pi_t$ is the function mapping each $\pi_{\mathcal T}$ to the $\pi_t$ specified by the procedure described in the proof of Lemma~\ref{th:lemma1}.
\item $\sigma: \Pi_t \to \Pi_{\mathcal T}$ is the function mapping each $\pi_t$ to the $ \pi_{\mathcal T}$ specified by the procedure described in the proof of Lemma~\ref{th:lemma2}.
\end{itemize}

Those two functions can also be extended to mixed strategies, by converting each pure plan and summing the probability masses of the converted plans. Formally, we have:
\begin{align*}
\forall \mu_{\mathcal T} \in \Delta^{\Pi_{\mathcal T}} : \rho(\mu_{\mathcal T})[\pi_t] = \sum_{\pi_{\mathcal T} : \rho(\pi_{\mathcal T})=\pi_t} \mu_{\mathcal T}(\pi_{\mathcal T}),\\
\forall \mu_t \in \Delta^{\Pi_t} : \sigma(\mu_t)[\pi_{\mathcal T}] = \sum_{\pi_t : \sigma(\pi_t)=\pi_{\mathcal T}} \mu_t(\pi_t).
\end{align*}
\end{definition}
We can now show the payoff-equivalence between a game $\mathcal{G}$ and its converted version $\mathcal G' = \algoname{ConvertGame}(\mathcal G)$: 
\begin{restatable}{theorem}{theoremequivalence}
A public-turn-taking vEFG $\mathcal G$ and its converted version $\mathcal G' = \algoname{ConvertGame}(\mathcal G)$ are payoff-equivalent. Formally:
\begin{align*}
\forall \pi_{\mathcal T} \; \forall \pi_o,\pi_c :  u_{\mathcal T}(\pi_{\mathcal T}, \pi_o, \pi_c) = u_t(\rho(\pi_{\mathcal T}), \pi_o, \pi_c), \\
\forall \pi_t \; \forall \pi_o,\pi_c :  u_{\mathcal T}(\sigma(\pi_t), \pi_o, \pi_c) = u_t(\pi_t, \pi_o, \pi_c).
\end{align*}
\end{restatable}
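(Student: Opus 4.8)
The plan is to read the statement off Lemmas~\ref{th:lemma1} and~\ref{th:lemma2}, which already do the real work: they establish that the play of a pure plan profile in one game is mirrored, history by history, by the play of the mapped profile in the other game. Once the two plays are known to reach corresponding terminal nodes, payoff-equivalence is immediate, because \algoname{ConvertGame} copies the utility of every terminal node of $\mathcal G$ onto the corresponding terminal node of $\mathcal G'$ (the terminal branch of Algorithm~\ref{alg:PubTeam} sets $u'_p(h') \gets u_p(h)$), which in particular identifies the coordinator utility $u_t$ with the team utility $u_{\mathcal T}$.

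In detail, for the first identity I fix arbitrary reduced-normal-form plans $\pi_{\mathcal T}, \pi_o, \pi_c$ and set $\pi_t = \rho(\pi_{\mathcal T})$. By Lemma~\ref{th:lemma1}, the sequence of histories visited while playing $(\pi_{\mathcal T},\pi_o,\pi_c)$ in $\mathcal G$ is, entry by entry, the image under \algoname{PublicTeamConversion} of the sequence of histories visited while playing $(\pi_t,\pi_o,\pi_c)$ in $\mathcal G'$; in particular the two terminal histories match, i.e.\ if the play in $\mathcal G$ ends at $z \in \mathcal Z$ then the play in $\mathcal G'$ ends at $z' = \algoname{PublicTeamConversion}(z) \in \mathcal Z'$. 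Since the conversion sets $u'_t(z') = u_{\mathcal T}(z)$, we get $u_t(\pi_t,\pi_o,\pi_c) = u'_t(z') = u_{\mathcal T}(z) = u_{\mathcal T}(\pi_{\mathcal T},\pi_o,\pi_c)$, which is the claim. The second identity is obtained in exactly the same way, starting from $\pi_t$ and invoking Lemma~\ref{th:lemma2} to produce $\pi_{\mathcal T} = \sigma(\pi_t)$ and the matching trajectory.

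The only points requiring care are bookkeeping rather than conceptual, so I expect no substantial obstacle. First, the conversion inserts, after each coordinator prescription, an auxiliary node $h''$ owned by chance with a single action played with probability one; these nodes are deterministic, so they neither change which terminal node is reached nor introduce any randomization, and the trajectory correspondence in the lemmas already accounts for them. Second, $\pi_c$ here denotes a fixed realization of chance, so the statement is really a per-leaf identity; averaging both sides against the chance distribution $\sigma_c$, which is copied verbatim by the conversion (including the trivial distributions on the $h''$ nodes), yields the corresponding statement for expected utilities. Finally, the extension to mixed strategies is immediate from linearity of the utility in the plan distribution together with the definition of $\rho$ and $\sigma$ on mixed strategies: $u_{\mathcal T}(\mu_{\mathcal T},\pi_o,\pi_c) = \sum_{\pi_{\mathcal T}}\mu_{\mathcal T}(\pi_{\mathcal T})\,u_{\mathcal T}(\pi_{\mathcal T},\pi_o,\pi_c) = \sum_{\pi_t}\rho(\mu_{\mathcal T})[\pi_t]\,u_t(\pi_t,\pi_o,\pi_c) = u_t(\rho(\mu_{\mathcal T}),\pi_o,\pi_c)$, and symmetrically for $\sigma$. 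The whole argument is thus a short consequence of the trajectory-matching lemmas plus the fact that payoffs are transferred unchanged at the leaves, with the deterministic auxiliary chance nodes being the only thing to double-check.
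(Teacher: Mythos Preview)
Your proposal is correct and follows exactly the same approach as the paper, which simply states that the result follows trivially from Lemmas~\ref{th:lemma1} and~\ref{th:lemma2}. If anything, you have spelled out more detail (the terminal-node utility copy, the harmlessness of the auxiliary deterministic chance nodes, and the linear extension to mixed strategies) than the paper itself provides.
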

\begin{proof}
The proof follows trivially from Lemmas~\ref{th:lemma1}~and~\ref{th:lemma2}.
\end{proof}

Furthermore, the correspondence between the strategies in the two games are then used to derive the main result of this work that shows the equivalence between a NE of the converted 2p0s game and a TMEcor of the original ATG, thus enabling the usage of a powerful set of tools to compute equilibria for ATGs.  
\begin{restatable}{theorem}{equivalenceNE}
Given a public-turn-taking vEFG $\mathcal G$, and the correspondent converted game $\mathcal G' = \algoname{ConvertGame}(\mathcal G)$, a Nash Equilibrium $\mu_t^*$ in $\mathcal G'$ corresponds to a TMECor $\mu_\team^* = \sigma(\mu_t^*)$ in $\mathcal G$.
\label{th:main}
\end{restatable}
\begin{proof}
By hypothesis that $\mu_t^*$ is a NE, we have that:
\begin{equation*}
\mu_t^* \in \argmax_{\mu_t \in \Delta^{\Pi_t}} \min_{\mu_o \in \Delta^{\Pi_o}} \sum_{\substack{\pi_t \in \Pi_t \\ \pi_o\in \Pi_o \\ \pi_c \in \Pi_c}} \mu_t(\pi_t) \mu_o(\pi_o) \mu_c(\pi_c) u_t(\pi_t, \pi_o, \pi_c)
\end{equation*}
We need to prove:
\begin{equation*}
\sigma(\mu_t^*) \in \argmax_{\mu_{\mathcal T} \in \Delta^{\Pi_{\mathcal T}}} \min_{\mu_o \in \Delta^{\Pi_o}} \sum_{\substack{\pi_{\mathcal T} \in \Pi_{\mathcal T} \\ \pi_o\in \Pi_o \\ \pi_c \in \Pi_c}} \mu_{\mathcal T}(\pi_{\mathcal T}) \mu_o(\pi_o) \mu_c(\pi_c) u_{\mathcal T}(\pi_{\mathcal T}, \pi_o, \pi_c)
\end{equation*}

Let $\min_{TMECor}(\mu_{\mathcal T})$ and $\min_{NE}(\mu_t)$ be the inner minimization problem in the TMECor and NE definition respectively.

\textit{Absurd.} Suppose $\exists \; \bar \mu_{\mathcal T}$ with a greater value than $\sigma(\mu_t^*)$. Formally:
\begin{equation*}
\min_{TMECor}(\bar \mu_{\mathcal T}) > \min_{TMECor}(\mu_t^*).
\end{equation*}
In such a case, we could define $\bar \mu_t = \rho(\bar \mu_{\mathcal T})$ having value:
\begin{equation*}
\min_{NE}(\bar \mu_t) = \min_{TMECor}(\bar \mu_{\mathcal T}) > \min_{TMECor}(\sigma(\mu_t^*)) = \min_{NE}(\mu_t^*),
\end{equation*}
where the equalities are due to the payoff equivalence. However this is absurd since by hypothesis $\mu_t^*$ is a maximum. Therefore necessarily:
\begin{equation*}
\sigma(\mu_t^*) \in \argmax_{\mu_{\mathcal T} \in \Delta^{\Pi_{\mathcal T}}} \min_{NE}(\mu_{\mathcal T}).
\end{equation*}
This concludes the proof.
\end{proof}
\section{Information Structure in Team Games} \label{app:info_structure}
The core problem of finding a TMECor in adversarial team games resides in \textit{asymmetric visibility} since team members have a private state that does not allow to create a joint coordination player.

In the following, we characterize the possible types of asymmetric visibility that may cause imperfect recall for the joint player, and singularly address them.

\begin{itemize}
\item \textbf{Non-visibility over a team member's action}. If a team member plays an action that is hidden from another team member, the joint team player would have imperfect recall due to forgetting his own played actions. This source of imperfect recallness can be avoided in a TMECor by considering the shared deterministic strategies before the game starts. This allows us to know a priori which are the exact actions played by team members in each node. Thus it is safe to apply a perfect recall refinement in the original game, which corresponds to always consider the chosen action of a team member as $\mathrm{seen}$ by other team members.

\item \textbf{Non-visible game structure}. Consider two nodes in the same information set for a player before which the other team member may have played a variable number of times, due to a chance outcome non-visible to the team member of these nodes. In this case, a perfect recall refinement is not applicable to distinguish the nodes, because it would give the joint coordinator visibility not correspondent to the one of the players in the game. To solve this edge case, we require the property of public turn-taking.

\item \textbf{Private information disclosed by chance/adversary to specific team members.} It is the most complex type of non-visibility, since in a TMECor we have no explicit communication channels through which to share information, and therefore this type of joint imperfect recall can only be addressed by considering a strategically equivalent representation of the game in which at most one of the team players has private information. 
\end{itemize}

\section{Pruned Public Team Conversion}\label{app:pruned_PubTeam}
In the following, we present a variation of Algorithm~\ref{alg:PubTeam} that directly produces an information-lossless abstraction of the converted game. Modifications to Algorithm~\ref{alg:PubTeam} are highlighted in bold.

\begin{algorithm}[!htb]
\caption{Pruned Public Team Conversion}\label{alg:pruned_PubTeam}
\begin{algorithmic}[1]
	\Function {ConvertGame}{$\mathcal G$}
		\State initialize $\mathcal G'$ new game
		\State $\mathcal N' \gets \{t,o\}$
		\State $h'_{\emptyset} \gets$ \Call{PubTeamConv}{$h_{\emptyset}, \mathcal G, \mathcal G'$} \Comment{new game root}
		\State \Return{$\mathcal G'$}
	\EndFunction
	\item[] 
	\Function{PubTeamConv}{$h$, $\mathcal G$, $\mathcal G'$, $\boldsymbol{\mathcal X}$}
		\State initialize $h' \in \mathcal H'$
  		\If {$h \in \mathcal Z$} \Comment{terminal node}
  			\State $h' \gets h' \in \mathcal Z'$
  			\State $u'_p(h') \gets u_p(h) \quad \forall p \in \mathcal N$
  		\ElsIf {$\mathcal P(h) \in \{o, c\}$} \Comment{opponent or chance}
  			\State $\mathcal P'(h') \gets \mathcal P(h)$
  			\State $\mathcal A'(h') \gets \mathcal A(h)$
  			\If {$h$ is chance node}
  				\State $\sigma_c'(h') = \sigma_c(h)$
  			\EndIf
  			\For {$a' \in \mathcal A'(h')$}
  				\State $Pub_t'(a') \gets$ check $Pub_{\mathcal T}(a') = \mathrm{pub}$
  				\State $Pub_o'(a') \gets Pub_o(a')$
  				\State $h'a' \gets$ \Call{PubTeamConv}{$ha'$, $\mathcal G$, $\mathcal G'$, $\boldsymbol{\mathcal X}$}
  			\EndFor 
  		\Else \Comment{team member}
  			\State $\mathcal P'(h') = t$
  			\State $\mathcal A'(h') \gets \bigtimes_{I \in \mathcal  S_{\mathcal T}(h) \boldsymbol{: \nexists I' \in \mathcal X \; matching \; I}} \mathcal A(I)$ \label{lst:line:prescription} \Comment{prescriptions}
  			\For {$\Gamma' \in \mathcal A'(h')$}
  				\State $Pub_t'(\Gamma') \gets \mathrm{seen}, Pub_o'(\Gamma') \gets \mathrm{unseen}$
  				\State $a' \gets \Gamma'[\mathcal I(h)]$ \Comment{extract chosen action}
  				\State \boldsymbol{$\mathcal X \gets \mathcal X \cup \{I : \Gamma'(I) \neq a'\}$} \Comment{update $\mathcal X$ with incompatible private states}
  				\State initialize $h'' \in \mathcal H'$
  				\State $\mathcal A'(h'') \gets \{a'\}$
 				\State $\mathcal P(h'') = c$
  				\State $Pub_t'(a') \gets \mathrm{seen}$
 				\State $Pub_o'(a') = Pub_o(a')$
 				\State $\sigma_c'(h'') =$ play $a'$ with probability 1
  				\State $h''a' \gets$ \Call{PubTeamConv}{$ha'$, $\mathcal G$, $\mathcal G'$, $\boldsymbol{\mathcal X}$}
  				\State $h'\Gamma \gets h''$ 
  			\EndFor 
  		\EndIf
  		\State \Return $h'$
	\EndFunction
\end{algorithmic}
\end{algorithm}

\section{Converted game representation} \label{app:figures}
We present a complete game as an example to show the effects of different types of abstractions applied to the converted game. To ease the visualization, we focus on a cooperative game with no adversary.

\begin{figure}[!htb]
    \centering
    \makebox[\textwidth][c]{\includegraphics[scale=0.75]{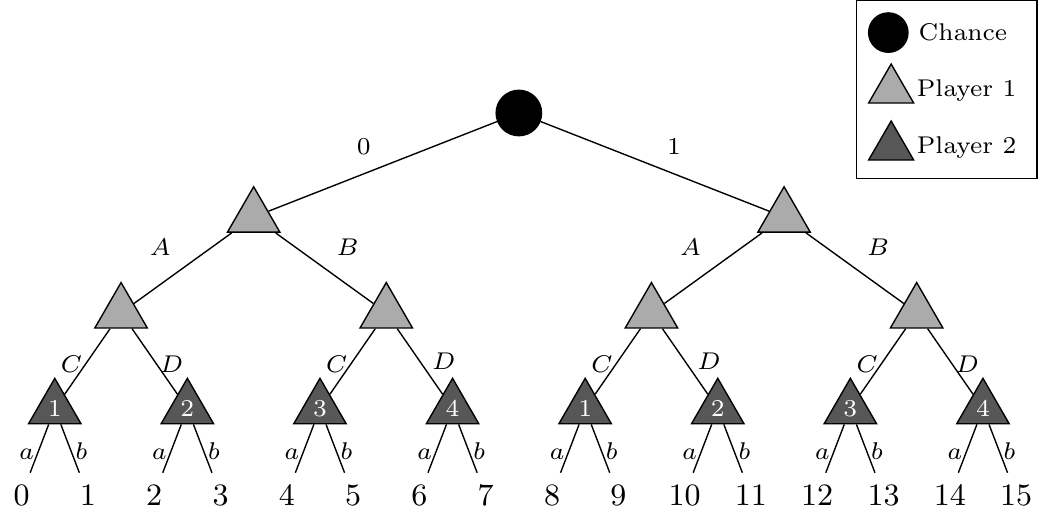}}
    \caption[Example of a cooperative game]{Example of a cooperative game. Player 2 can see all actions apart from chance outcomes $0$, $1$. Nodes of a player with same number are in the same infoset.}
    \label{fig:example}
\end{figure}

\begin{figure}[!htb]
    \centering
    \makebox[\textwidth][c]{\includegraphics[scale=0.75]{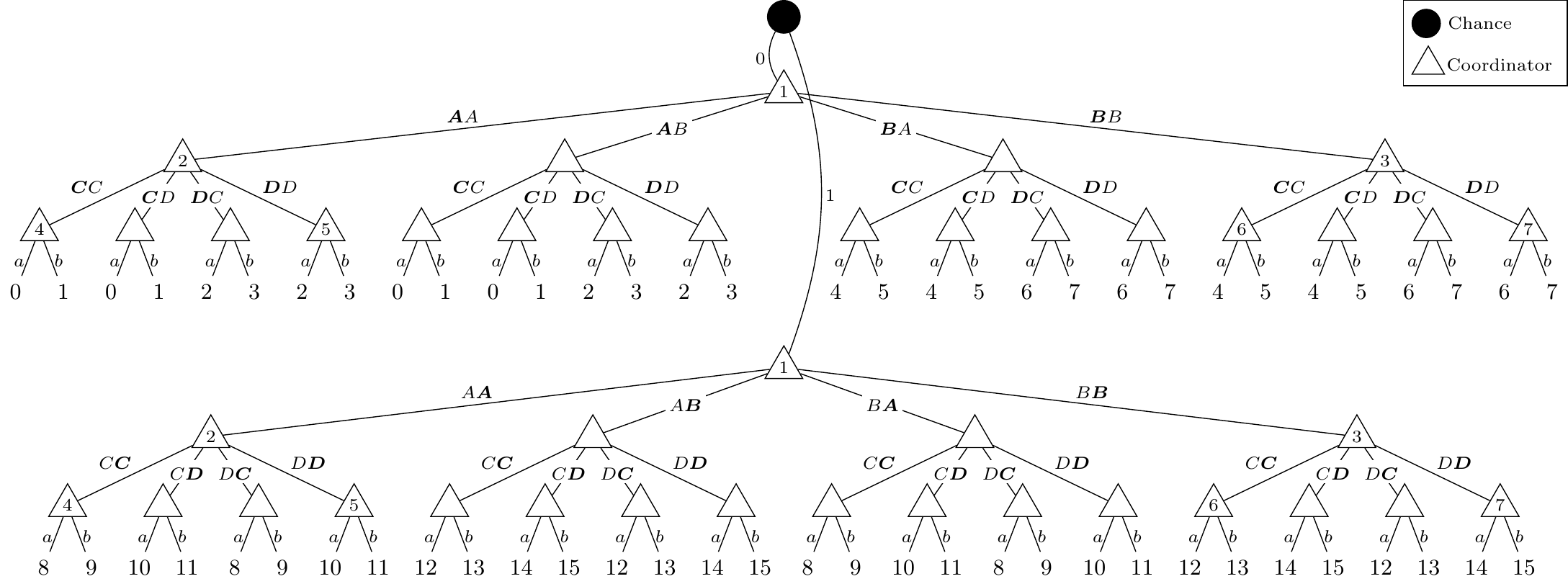}}
    \caption[Example of a converted game]{Example of Figure \ref{fig:example} converted. Nodes of a player with same number are in the same infoset. For notational clarity, dummy chance nodes are not represented, prescriptions list the action to take for private state $0$ and $1$, the action taken afterward is in bold in the prescription.}
    \label{fig:convExample}
\end{figure}

\begin{figure}[H]
    \centering
    \makebox[\textwidth][c]{\includegraphics[scale=0.75]{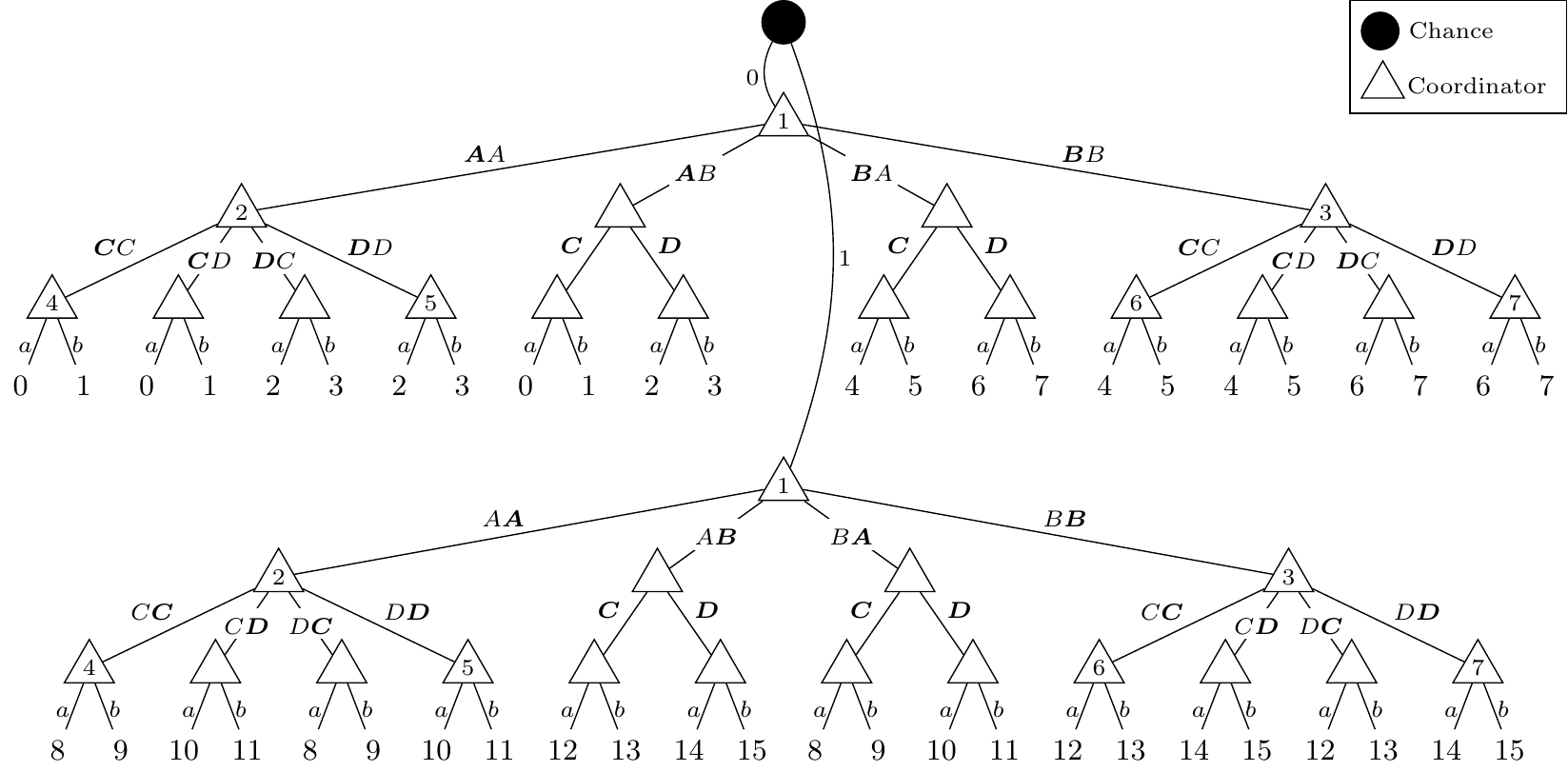}}
    \caption[Example of a converted pruned game]{Example of Figure \ref{fig:example} converted using belief pruning. Nodes of a player with the same number are in the same infoset. For notational clarity, dummy chance nodes are not represented, prescriptions list the action to take for private state $0$ and $1$, the action taken afterward is in bold in the prescription.}
    \label{fig:convPrunedExample}
\end{figure}

\begin{figure}[H]
    \centering
    \makebox[\textwidth][c]{\includegraphics[scale=0.75]{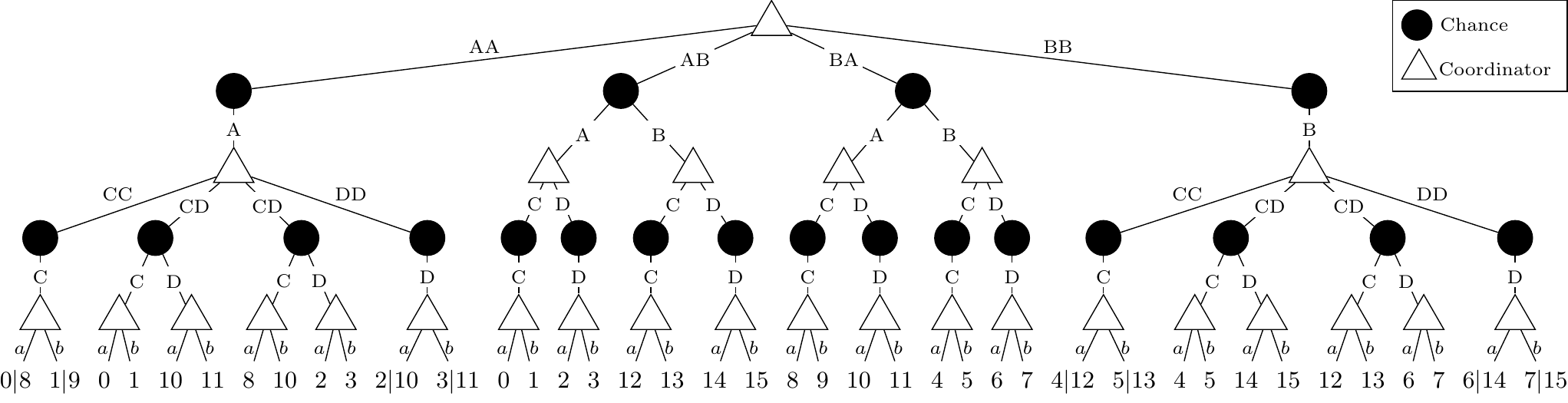}}
    \caption[Example of a converted folded game]{Example of Figure \ref{fig:example} converted using folded representation. For notational clarity, prescriptions list the action to take for private state $0$ and $1$. Terminal nodes in the form $x|y$ represent a terminal node which has a weighted average value with respect to the outcomes $x$ and $y$.}
    \label{fig:convFoldedExample}
\end{figure}

\section{Information-lossless abstraction evaluation} \label{app:pruning}
\begin{table*}[!ht]
\begin{subtable}{.45\textwidth}
  \resizebox{\columnwidth}{!}{%
\begin{tabular}{rrrrr}
\toprule
$H$  & normal   & basic    & pruning  & folding  \\
\midrule
1  & 8.00E+00 & 3.00E+00 & 3.00E+00 & 9.00E+00 \\
2  & 6.40E+01 & 2.70E+01 & 2.70E+01 & 7.50E+01 \\
3  & 5.12E+02 & 2.19E+02 & 1.35E+02 & 3.75E+02 \\
4  & 4.10E+03 & 1.76E+03 & 5.19E+02 & 1.46E+03 \\
5  & 3.28E+04 & 1.40E+04 & 1.72E+03 & 4.86E+03 \\
6  & 2.62E+05 & 1.12E+05 & 5.18E+03 & 1.48E+04 \\
7  & 2.10E+06 & 8.99E+05 & 1.46E+04 & 4.18E+04 \\
8  & 1.68E+07 & 7.19E+06 & 3.92E+04 & 1.13E+05 \\
9  & 1.34E+08 & 5.75E+07 & 1.01E+05 & 2.93E+05 \\
10 & 1.07E+09 & 4.60E+08 & 2.55E+05 & 7.40E+05 \\
11 & 8.59E+09 & 3.68E+09 & 6.27E+05 & 1.82E+06 \\
12 & 6.87E+10 & 2.95E+10 & 1.51E+06 & 4.41E+06 \\
13 & 5.50E+11 & 2.36E+11 & 3.59E+06 & 1.05E+07 \\
14 & 4.40E+12 & 1.88E+12 & 8.40E+06 & 2.46E+07 \\
\bottomrule
\end{tabular}
}
  \caption{only $P1$ has private information}
  \label{fig:sub1*}
\end{subtable}
\begin{subtable}{.45\textwidth}
  \resizebox{\columnwidth}{!}{%
\begin{tabular}{rrrrr}
\toprule
  $H$ &   normal &    basic &  pruning &  folding \\
\midrule
  1 & 8.00E+00 & 9.00E+00 & 9.00E+00 & 9.00E+00 \\
  2 & 6.40E+01 & 8.10E+01 & 8.10E+01 & 7.50E+01 \\
  3 & 5.12E+02 & 6.57E+02 & 4.05E+02 & 3.75E+02 \\
  4 & 4.10E+03 & 5.26E+03 & 1.56E+03 & 1.46E+03 \\
  5 & 3.28E+04 & 4.21E+04 & 5.16E+03 & 4.86E+03 \\
  6 & 2.62E+05 & 3.37E+05 & 1.55E+04 & 1.48E+04 \\
  7 & 2.10E+06 & 2.70E+06 & 4.37E+04 & 4.18E+04 \\
  8 & 1.68E+07 & 2.16E+07 & 1.17E+05 & 1.13E+05 \\
  9 & 1.34E+08 & 1.73E+08 & 3.04E+05 & 2.93E+05 \\
  10 & 1.07E+09 & 1.38E+09 & 7.65E+05 & 7.40E+05 \\
  11 & 8.59E+09 & 1.10E+10 & 1.88E+06 & 1.82E+06 \\
  12 & 6.87E+10 & 8.84E+10 & 4.53E+06 & 4.41E+06 \\
  13 & 5.50E+11 & 7.07E+11 & 1.08E+07 & 1.05E+07 \\
  14 & 4.40E+12 & 5.65E+12 & 2.52E+07 & 2.46E+07 \\
\bottomrule
\end{tabular}
}
  \caption{both $P1$ and $P2$ have private information}
  \label{fig:sub2*}
\end{subtable}
\caption{Comparison of total number of nodes for $C=3$, $A=2$.}
\label{fig:pruning_compl}
\end{table*}
We proceed to analyze the size in total number of coordinator nodes of the converted game for player P1 nodes, depending on the abstraction technique used. To formalize the total number of nodes, we use a succession notation, where $s_l(c)$ indicates the number of nodes at level $l$ in which P1 may be in exactly $c$ private states. Such a notation is particularly useful to represent the relation between private states, abstraction, and the total number of nodes. Table~\ref{fig:pruning_compl} summarizes the results for $C=3$, $A = 2$ and for various values of the parameter $H$.

\medskip
\textbf{Normal form representation}. As a baseline comparison, we compute the number of normal form plans in the game. The total number of plans for P1 can be computed as $A^{C \cdot H}$.

\medskip
\textbf{Basic Representation}. Each of the $H$ levels has $A^C$ actions, and we have $C$ independent trees due to the initial chance. Since we do not perform any belief pruning, all nodes have $C$ possible private states.

The correspondent succession is:
\begin{align*}
x_0(c) & = 
\left\{\begin{array}{ll}
0 & \text{if } c \not = C\\
C & \text{if } c = C
\end{array}\right.
\text{initial C chance outcomes}\\
x_l(c) & =
\left\{\begin{array}{ll}
0 & \text{if } c \not = C\\
x_{l-1}(c) \cdot A^C & \text{if } c = C
\end{array}\right.
A^C \text{ fanout at each level}
\end{align*}
Therefore:
\begin{align*}
\mathrm{tot}(C,A,H) & = \sum_{l=0}^H \sum_{c=1}^C x_l(c) \\
& = C \cdot \sum_{i=1}^H(A^C)^{i}
\end{align*}

\medskip
\textbf{Pruning Representation}. Given a node with a generic number $I$ of private states, we can work by induction to retrieve the number of generated nodes:
\begin{itemize}
\item children left with $I$ possible infostates: $A$. They correspond to the nodes reached through a prescription assigning the same action for all $I$ states;

\item children left with $I-1$ possible infostates: $A \cdot (A-1)^{1} \cdot(I-1)$. They correspond to the nodes reached through a prescription assigning any of the $A$ actions to the state corresponding to the card drawn in this subtree (defined by the chance outcomes) and to other $I-2$ states, and assign any of the remaining $A-1$ actions to the remaining state;

\item children left with $I-2$ possible infostates: $A \cdot(A-1)^{2} \cdot \binom{I-1}{2}$. They correspond to the nodes reached through a prescription assigning any of the $A$ actions to the state corresponding to the card drawn in this subtree (defined by the chance outcomes) and to other $I-3$ states, and assign any of the remaining $A-1$ actions to the 2 remaining states;

\item ...

\item children left with $1$ possible infostates: $A \cdot(A-1)^{C-1} \cdot \binom{C-1}{C-1}$. They correspond to the nodes reached through a prescription assigning any of the $A$ actions to the state corresponding to the card drawn in this subtree (defined by the chance outcomes) and assign any of the remaining $A-1$ actions to the $I-1$ remaining states.
\end{itemize}

We can generalize this pattern. Children left with $i$ possible private states out of available $I$: 
\begin{equation*}
n(i,I) = A \cdot (A-1)^{I-i} \cdot \binom{I-1}{I-i} \text{ for } i \in [1,I]
\end{equation*}
As a check: 
\begin{align*}
\sum_{i=1}^C n(i,I) & = \sum_{i=1}^I A \cdot (A-1)^{I-i} \cdot \binom{I-1}{I-i} = \\
& = A \sum_{j=0}^{I-1} \binom{I-1}{j} (A-1)^j 1^{I-1-j} =\\
& = A \cdot [(A-1)+1]^{I-1} = A^I
\end{align*}
which corresponds to the expected $A^I$ prescriptions available in the current node.

Then repartition of each level's nodes will depend on the number of nodes having a certain number of private states in the previous state, according to the repartition indicated by $n(i,I)$. In particular, each of the $b_{l-1}(c)$ nodes will generate $n(i,c)$ children with $i$ private states.

The correspondent succession is:
\begin{align*}
y_0(C) & = 
\left\{\begin{array}{ll}
0 & \text{if } c \not = C\\
C & \text{if } c = C
\end{array}\right.
\text{initial C chance outcomes}\\
y_l(c) & = \sum_{i=c}^C b_{l-1}(i) \cdot n(c,i)
\end{align*}
Note that we do not count auxiliary chance nodes, since in practical implementation they can be easily compacted with the previous coordinator nodes.

Therefore:
\begin{equation*}
\mathrm{tot}(C,A,H) = \sum_{l=0}^H \sum_{c=1}^C y_l(c)
\end{equation*}

\medskip
\textbf{Folding Representation}. In this representation, we have no initial chance sampling, and each coordinator node presents a chance node per prescription, each with a variable number of children depending on the number of unique actions.

To compute the total number of nodes per level, we can acknowledge that each coordinator node with $c$ private states corresponds to $c$ nodes (all with $c$ private states) in the pruning representation.
Therefore, at each level we have a number of coordinator node $z_l(c) = y_l(c)/c$

In this case, chance nodes have to be considered in the total nodes computed, since they cannot be easily reduced. In particular, each coordinator node has associated a chance node per prescription action available.

Therefore:
\begin{equation*}
z_l(c) = y_l(c) / c
\end{equation*}
\begin{equation*}
\mathrm{tot}(C,A,H) = \sum_{l=0}^H \sum_{c=1}^C y_l(c)/c \cdot (A^c +1)
\end{equation*}

Moreover, such an analysis can be extended also to the case of 2 initial levels of chance nodes, extracting one out of $C$ private states for P1 and P2 respectively. In this case, basic and pruning representation have a different starting condition, with $x_0(C) = y_0(C) = C^2$, while the folding representation has no changes.

\section{Experimental settings} \label{app:exp}
\subsection{Poker instances}
We refer to the three-player generalizations of Kuhn and Leduc poker proposed by \cite{Farina2018ExAC}.

Like all poker games, at the start of the game each player antes one to the pot, and receives a private card. Then players play sequentially in turn. Each player may check by adding to the pot the difference between the higher bet made by other players and their current bet (i.e. by matching the maximum bet made by others). Each player may fold whenever a check requires putting more money into the pot and the player instead decides to withdraw. Each player may raise whenever the maximum number of raises allowed by the game is not reached, by adding to the pot the amount required by a check plus an extra amount called raise amount. A betting round ends when all non-folded players except the last raising player have checked.

In \textbf{Kuhn poker}, there are three players and k possible ranks with k different ranks. The maximum number of raises is one, and the raising amount is 1. At the end of the first round, the showdown happens. The player having the highest card takes all the pot as payoff.

In \textbf{Leduc poker}, there are three players, k possible ranks having 3 cards in the deck each, and 1 or 2 raises. The raise amount is 2 for the first raise and 4 for the second raise. At the end of the first round, a public card is shown, and a new round of betting starts from the same player starting in the first round. In the end, the showdown happens. Winning players are having a private card matching the rank of the public card. If no player forms a pair, then the winning player is the one with the card with the highest rank. In the case of multiple winners, the pot is split equally.

\subsection{Implementation and execution details}
We implemented the folded representation of both Kuhn and Leduc taking advantage of the OpenSpiel \cite{lanctot2019openspiel} framework. The framework allowed us to specify the game as an evolving state object and provided the standard resolution algorithms for the computation of a Nash Equilibrium in the converted game.

The implementation is in Python3.8 and the experiments have been performed on a machine running Ubuntu 16.04 with a 2x Intel Xeon E5-4610 v2 @ 2.3GH CPU. The implementation is single-threaded.

\section{Converted game size} \label{app:sizes}
In the following, we present additional details on the sizes of the game instances obtained by converting Kuhn and Leduc poker games using \algoname{PublicTeamConversion}.

\begin{table}[H]
\centering
\makebox[0pt]{
\begin{tabular}{c|ccccccccc}
\toprule
Number of ranks						& 3		& 3		& 3		& 4		& 4		& 4		& 5		& 5		&		5\\
Adversary position					& 0		& 1		& 2		& 0		& 1		& 2		& 0		& 1		&	2\\
\midrule
Coordinator nodes					& 222   & 291	& 591	& 1560	& 2220	& 7412	& 8890	& 13025	& 66465\\
Adversary nodes						& 219	& 372	& 288	& 1996	& 5416	& 2656	& 12425	& 54040	& 16560\\
Terminal nodes						& 1320	& 1704	& 2436	& 16584	& 24536	& 51800	& 144740& 235660& 760520\\
Chance nodes						& 1129	& 1405	& 2461	& 10913	& 14641	& 40977	& 85521	& 119001& 514681\\
Chances with one child only			& 936	& 1188	& 2184	& 5680	& 7944	& 25400	& 29840	& 43360	& 218940\\
Total number of nodes				& 2890	& 3772	& 5776	& 31053	& 46813	& 102845& 251576& 421726& 1358226\\
Coordinator information sets		& 86	& 113	& 155	& 392	& 556	& 856	& 1738	& 2543	& 4093\\
Adversary information sets			& 12	& 12	& 12	& 16	& 16	& 16	& 20	& 20	& 20\\
Time taken for a full traversal		& 2.0s	& 2.3s	& 3.36s	& 14.7s	& 18.1s	& 37.2s	& 68.6s	& 125s	& 447s\\
\bottomrule
\end{tabular}
}
\vspace{3pt}
\caption{Converted Kuhn game characteristics for varying parameters.}
\label{tab:tabKuhn}
\end{table}

\begin{table}[H]
\centering
\makebox[0pt]{
\begin{tabular}{c|cccccc}
\toprule
Number of ranks						& 3		& 3			& 3			& 4		& 4		& 4	\\
Number of raises					& 1		& 1			& 1			& 2		& 2		& 2 \\
Adversary position					& 0		& 1			& 2			& 0		& 1		& 2	\\
\midrule
Coordinator nodes					& 84243 & 117126	& 232950	& 57138		& 66268		& 76384 \\
Adversary nodes						& 60543	& 98034		& 134196	& 32790		& 38622		& 46758 \\
Terminal nodes						& 354999& 476187	& 775233	& 163580	& 185994	& 213098\\
Chance nodes						& 284200& 378928	& 694132	& 160395	& 184065	& 211437\\
Chances with one child only			& 181020& 250908	& 494544	& 137044	& 159202	& 184738\\
Total number of nodes				& 783985& 1070275	& 1836511	& 413903	& 474949	& 547677\\
Coordinator information sets		& 7184	& 7232		& 7316		& 5624		& 5632		& 5650	\\
Adversary information sets			& 228	& 228		& 228		& 630		& 630		& 630	\\
Time taken for a full traversal		& 332s	& 322s		& 686s		& 220s		& 255s		& 183s	\\
\bottomrule
\end{tabular}
}
\vspace{3pt}
\caption{Converted Leduc game characteristics for varying parameters.}
\label{tab:tabLeduc}
\end{table}

\end{document}